\newlength\mylen
\newcommand{\algorithmfootnote}[2][\footnotesize]{%
  \let\old@algocf@finish\@algocf@finish
  \def\@algocf@finish{\old@algocf@finish
    \leavevmode\rlap{\begin{minipage}{\linewidth}
    #1#2
    \end{minipage}}%
  }%
}
\Crefname{figure}{Figure}{Figures}
\crefname{figure}{Figure}{Figures}
\crefname{example}{Example}{Example}
\crefname{theorem}{Theorem}{Theorem}
\crefname{corollary}{Corollary}{Corollary}
\crefname{lemma}{Lemma}{Lemma}
\crefname{proposition}{Proposition}{Proposition}
\crefname{assumption}{Assumption}{Assumption}
\crefname{section}{Section}{Section}
\crefname{algorithm}{Algorithm}{Algorithm}
\newcommand{\scaleeq}[2][0.95]{\scalebox{#1}{$#2$}}
\newlist{propenum}{enumerate}{1} 
\setlist[propenum]{label=\alph*{\rm)}, ref=\theproposition(\alph*)}
\newlist{corenum}{enumerate}{1} 
\setlist[corenum]{label=\alph*{\rm)}, ref=\thecorollary(\alph*)}
\newlist{lemenum}{enumerate}{1} 
\setlist[lemenum]{label=\alph*{\rm)}, ref=\thelemma(\alph*)}
\declaretheorem[name=Theorem,style=definition,numberwithin=section]{theorem}
\declaretheorem[name=Definition,style=definition]{definition}
\declaretheorem[name=Proposition,numberlike=theorem]{proposition}
\declaretheorem[name=Lemma,numberlike=theorem]{lemma}
\declaretheorem[name=Remark,style=definition,numberwithin=section]{remark}
\newcommand{\bsmat}{\left[\begin{smallmatrix}}
\newcommand{\esmat}{\end{smallmatrix}\right]}
\DeclareMathAlphabet{\mathsfit}{\encodingdefault}{\sfdefault}{m}{sl}
\SetMathAlphabet{\mathsfit}{bold}{\encodingdefault}{\sfdefault}{bx}{n}
\begin{document}

\title{Beyond Pairwise Interactions: Equivariant Hypergraph Diffusion for Crystal Structure Prediction}





\author{Yang Liu}
\orcid{0000-0003-3791-4343}
\affiliation{%
  \institution{Academy of Mathematics and Systems Science, Chinese Academy of Sciences}
  \department{}
  \streetaddress{}
  \city{Beijing}
  \state{}
  \postcode{}
  \country{China}
}
\email{liuyang2020@amss.ac.cn}

\author{Chuan Zhou}
\authornote{Corresponding authors.}
\orcid{0000-0001-9958-8673}
\affiliation{%
  \institution{Academy of Mathematics and Systems Science, Chinese Academy of Sciences}
  \department{}
  \streetaddress{}
  \city{Beijing}
  \state{}
  \postcode{}
  \country{China}}
\affiliation{%
  \institution{School of Cyber Security, University of Chinese Academy of Sciences}
  \department{}
  \streetaddress{}
  \city{Beijing}
  \state{}
  \postcode{}
  \country{China}}
\email{zhouchuan@amss.ac.cn}

\author{Shuai Zhang}
\orcid{0009-0001-3152-3651}
\affiliation{%
  \institution{Academy of Mathematics and Systems Science, Chinese Academy of Sciences}
  \department{}
  \streetaddress{}
  \city{Beijing}
  \state{}
  \postcode{}
  \country{China}}
\email{zhangshuai2021@amss.ac.cn}

\author{Xiaotong Wu}
\orcid{0009-0007-3857-0390}
\affiliation{%
  \institution{College of Materials Science and Engineering, Donghua University}
  \department{}
  \streetaddress{}
  \city{Shanghai}
  \state{}
  \postcode{}
  \country{China}}
\email{2210650@mail.dhu.edu.cn}

\author{Peng Zhang}
\orcid{0000-0001-7973-2746}
\affiliation{%
  \institution{Cyberspace Institute of Advanced Technology, Guangzhou University}
  \department{}
  \streetaddress{}
  \city{Guangzhou}
  \state{}
  \postcode{}
  \country{China}}
\email{p.zhang@gzhu.edu.cn}

\author{Xixun Lin}
\orcid{0009-0004-6645-0597}
\affiliation{%
  \institution{Institute of Information Engineering, Chinese Academy of Sciences}
  \department{}
  \streetaddress{}
  \city{Beijing}
  \state{}
  \postcode{}
  \country{China}
}
\email{linxixun@iie.ac.cn}

\author{Shirui Pan}
\orcid{0000-0003-0794-527X}
\affiliation{%
  \institution{Griffith University}
  \department{}
  \streetaddress{}
  \city{Gold Coast}
  \state{QLD}
  \postcode{4215}
  \country{Australia}
}
\email{s.pan@griffith.edu.au}

\author{Zhao Li}
\orcid{0000-0002-5056-0351}
\authornotemark[1]
\affiliation{%
  \institution{Zhejiang Lab}
  \department{}
  \streetaddress{}
  \city{Hangzhou}
  \state{}
  \postcode{}
  \country{China}
}
\affiliation{%
  \institution{R\&D, Hangzhou Yugu Technology Co., Ltd.}
  \department{}
  \streetaddress{}
  \city{Hangzhou}
  \state{}
  \postcode{}
  \country{China}
}
\email{lzjoey@gmail.com}

\renewcommand{\shortauthors}{Yang Liu et al.}

\begin{abstract} 
Crystal Structure Prediction (CSP) remains a fundamental challenge with significant implications for materials discovery and the advancement of various scientific disciplines. Recent advances have demonstrated that generative models, particularly diffusion models, are especially promising for CSP. However, traditional graph-based representations, where atomic bonds are modeled as pairwise graph edges, fail to capture the intricate high-order interactions essential for accurately describing crystal structures. To address this limitation, we propose leveraging hypergraphs to represent crystal structures, enabling more expressive modeling of multi-way atomic interactions. Hypergraphs naturally encode complex high-order relationships and respect key symmetries---such as permutation and periodic translation invariance---that are crucial for characterizing crystalline materials. Building on this representation, we propose the \textbf{E}quivariant \textbf{H}ypergraph \textbf{Diff}usion Model (\textbf{EH-Diff}), a generative framework designed to exploit the symmetry-preserving properties of hypergraphs. EH-Diff provides an efficient and accurate method for predicting crystal structures, with rigorous theoretical guarantees on invariance preservation. Empirically, we conduct extensive experiments on four benchmark datasets, and the results demonstrate that EH-Diff outperforms state-of-the-art CSP methods even with a single diffusion sample.
\end{abstract}

\begin{CCSXML}
<ccs2012>
   <concept>
       <concept_id>10010405.10010432.10010436</concept_id>
       <concept_desc>Applied computing~Chemistry</concept_desc>
       <concept_significance>500</concept_significance>
       </concept>
   <concept>
       <concept_id>10010405.10010432.10010441</concept_id>
       <concept_desc>Applied computing~Physics</concept_desc>
       <concept_significance>500</concept_significance>
       </concept>
 </ccs2012>
\end{CCSXML}

\ccsdesc[500]{Applied computing~Chemistry}
\ccsdesc[500]{Applied computing~Physics}


\keywords{Hypergraph, diffusion, crystal structure prediction}



\maketitle

\section{Introduction}\label{sec:introduction}

Crystal Structure Prediction (CSP), a long-standing challenge in the physical sciences since the early 2000s \cite{desiraju2002cryptic, woodley2008crystal}, aims to determine the stable 3D structure of a compound based solely on its chemical composition. Given the fundamental role of crystals in a wide range of materials \cite{steiger2005crystal, cartwright2012beyond}, accurately predicting their spatial configuration is crucial \cite{wooster1953physical, sato2016dynamic} for understanding the physical and chemical properties that govern their applications in various academic and industrial domains, including drug development \cite{zhang2023liquid}, battery technologies \cite{xie2018crystal, yin2024utilizing}, and catalysis \cite{bao2023microwave}.

\begin{figure}[b]
    \centering
    \includegraphics[width=.5\linewidth]{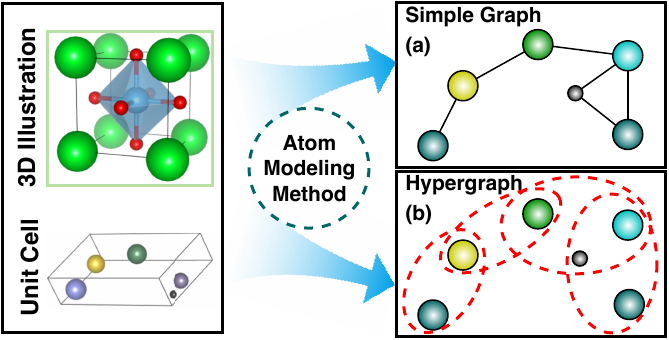}
    \caption{Illustration of Crystal Structure Modeling. (a) Simple Graph and (b) Hypergraph.
    Interactions within the hypergraph demonstrate effects involving multiple atoms, with pairwise (first-order) interactions evident in (a), while higher-order (non-pairwise) interactions are captured in (b).}
    \label{fig:EH-Diff-illustration}
\end{figure}

Conventional CSP methods typically employ density functional theory (DFT) \cite{kohn1965self} to iteratively evaluate the energy of the system. Techniques like random search \cite{pickard2011ab} and Bayesian optimization \cite{yamashita2018crystal} explore the energy landscape to identify stable local or global minima \cite{oganov2019structure}. 
Despite these efforts, prediction of crystal properties remains challenging because of the inherent complexity of crystalline materials. Crystalline structures, characterized by periodic repetition of a unit cell \cite{yan2022periodic} in a 3D lattice, feature highly ordered atomic arrangements. The physical properties of these materials, including electronic, mechanical, and thermal characteristics, are intrinsically linked to their atomic configurations.

\begin{figure*}[t]
    \centering
    \includegraphics[width=0.95\linewidth]{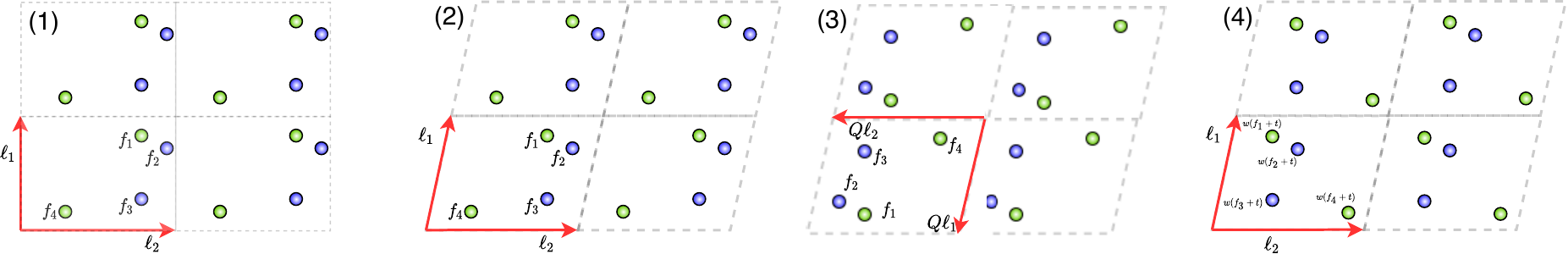}
    \caption{{Crystal Structure Illustration.} (1) crystal structure in a Cartesian coordinate system; (2) crystal structure in a fractional coordinate system. (2)$\to$(3) orthogonal transformation of the lattice vectors; (2)$\to$(4) periodic translation of the fractional coordinates.}
    \label{fig:graph}
\end{figure*}

Recent advances in AI-driven materials research have leveraged graph-based deep learning models \citep{xie2018crystal, xie2021crystal}, with a particular focus on diffusion models \citep{lin2024equicsp, jiao2024crystal}. These approaches employ graph-based diffusion techniques to generate atomic configurations while preserving structural invariants. However, traditional graph-based approaches, including graph neural networks and graph transformers, are inherently limited to modeling pairwise relationships \citep{kim2022equivariant}, where atoms are represented as vertices and atomic bonds as edges. This constraint prevents them from capturing higher-order interactions that are crucial for representing the structural complexity of crystal materials. Crystal structures often exhibit multi-body effects, such as shared lattice points and multi-atomic coordination \citep{voet1970crystal, grimmer1974coincidence}, which cannot be comprehensively represented using only pairwise edges. For instance, in perovskite structures (\ce{ABX3}), the B-site cation is coordinated by six X anions forming an octahedron---a quintessential multi-body interaction that pairwise edges alone cannot fully capture.

To overcome this limitation, we employ \emph{hypergraphs}, a generalized graph structure capable of capturing high-order relationships, to model crystal materials. Hypergraphs offer a key advantage over traditional graph representations by naturally encoding multi-atom interactions through hyperedges that connect more than two atoms simultaneously. This enables a more expressive representation of the intricate coordination environments and structural dependencies within crystal structures, facilitating the analysis of crystal stability, coordination geometry, and bond distributions. Moreover, the ability of hypergraphs to inherently incorporate symmetries and periodicity makes them particularly well-suited for modeling the fundamental properties of crystalline materials.

However, integrating hypergraphs into generative models for CSP poses several \emph{challenges}. First, constructing effective hypergraph-based material representations requires principled strategies for defining hyperedges that capture physically meaningful multi-atom interactions. Second, developing diffusion models that operate on hypergraph-structured data is non-trivial: the model must capture complex multi-atomic interactions while preserving key structural invariants---such as permutation and periodic translation symmetries---inherent in crystalline materials. Furthermore, these symmetries must be maintained throughout both the forward and reverse diffusion processes, necessitating equivariant architectures capable of handling the high-order interactions encoded in hypergraphs.

In this work, we address the CSP problem through a hypergraph-based modeling approach and propose the Equivariant Hypergraph Diffusion Model (EH-Diff), a novel framework that captures high-order interactions while preserving the intrinsic symmetries of crystal structures (defined in \Cref{section:Symmetries}). The model simultaneously generates the lattice vectors and fractional coordinates of all atoms via a denoising hypergraph architecture with periodic translation equivariance (detailed in \Cref{section:EH-Diff}). Through iterative refinement, our framework captures complex structural dependencies while provably maintaining invariance properties (theoretical justification in \Cref{section:Denoising}). Comprehensive experiments across four benchmark datasets demonstrate that EH-Diff consistently outperforms existing graph-based diffusion models for CSP.

Our contributions are summarized as follows:

\begin{itemize}[leftmargin=*,nosep]
    \item We formalize the CSP problem within a hypergraph modeling framework that describes high-order atomic interactions. To the best of our knowledge, this is the first work to integrate hypergraph representations with generative diffusion models for CSP, offering a novel perspective on capturing complex multi-atom coordination environments.
    \item We propose the \textbf{E}quivariant \textbf{H}ypergraph \textbf{Diff}usion Model ({\bf EH-Diff}), a framework that unifies hypergraph-based representation learning with a diffusion process. EH-Diff captures high-order atomic interactions while provably preserving key symmetries, including permutation and periodic translation invariance, which are fundamental properties of crystal structures.
    \item We evaluate the proposed framework through extensive experiments on four benchmark datasets, demonstrating its superior performance over state-of-the-art generative models even with a single diffusion sample.
\end{itemize}

\section{Problem Setup and Analysis}

In this section, we formally define the problem and provide a concise theoretical analysis. Detailed mathematical preliminaries are provided in \Cref{appendix:Preliminaries}.

\subsection{Representation of Crystal Structure}\label{section:Representation}

We introduce two frameworks for representing crystal structures and define the CSP task, providing a foundation for modeling atomic configurations in crystal lattices and predicting material behaviors.

\paragraph{Cartesian Coordinate System} A three-dimensional crystal is defined as an infinite periodic arrangement of atoms in space, characterized by the smallest repeating unit called the unit cell. The unit cell can be represented by a triplet $\mathbf{M}_C = (\mathbf{A}, \mathbf{X}, \mathbf{L})$, where $\mathbf{A} = [\mathbf{a}_1, \mathbf{a}_2, \ldots, \mathbf{a}_N] \in \mathbb{R}^{h \times N}$ denotes the set of one-hot encoded representations of atom types, $\mathbf{X} = [\mathbf{x}_1, \mathbf{x}_2, \ldots, \mathbf{x}_N] \in \mathbb{R}^{3 \times N}$ represents the Cartesian coordinates of the atoms, and $\mathbf{L} = [\ell_1, \ell_2, \ell_3] \in \mathbb{R}^{3 \times 3}$ is the lattice matrix, which contains three basis vectors that describe the periodicity of the crystal. The infinite periodic structure of the crystal can be formally expressed as:
\begin{align}
\{ (\mathbf{a}'_i, \mathbf{x}'_i) | \mathbf{a}'_i = \mathbf{a}_i, \, \mathbf{x}'_i = \mathbf{x}_i + \mathbf{L}\mathbf{k}, \, \forall \mathbf{k} \in \mathbb{Z}^{3} \},
\end{align}
where the $j$-th element of the integer vector $\mathbf{k}$ represents the integer translation along the $j$-th basis vector $\ell_j$ in three-dimensional space for $j\in\{1,2,3\}$.

\paragraph{Fractional Coordinate System} Analogous to the Cartesian coordinate system, a unit cell can also be defined by a triplet $\mathbf{M}_F = (\mathbf{A}, \mathbf{F}, \mathbf{L})$ in the fractional coordinate system. In this system, a point represented by the fractional coordinate vector $\mathbf{f} = [f_1, f_2, f_3]^\top \in [0, 1)^3$ corresponds to the Cartesian coordinate vector $\mathbf{x}$, given by $\mathbf{x} = \sum_{i=1}^{3} f_i \ell_i$, where $\ell_i$ denotes the $i$-th basis vector of the lattice.

\begin{remark}
We adopt the fractional coordinate system, representing the crystal as $\mathbf{M} = \mathbf{M}_F = (\mathbf{A}, \mathbf{F}, \mathbf{L})$, where the fractional coordinates of atoms in the unit cell form the matrix $\mathbf{F} \in [0, 1)^{3 \times N}$. The two frameworks are shown in \Cref{fig:graph}.
\end{remark}

\begin{remark}
The Cartesian coordinate system $\mathbf{X}$ uses three mutually orthogonal basis vectors as coordinate axes. In crystallography, the fractional coordinate system is commonly employed to capture the periodicity of crystal structures \cite{nouira2018crystalgan, ren2022invertible}, using lattice vectors $(\ell_1, \ell_2, \ell_3)$ as a basis.      
\end{remark}

\paragraph{Crystal Structure Prediction (CSP)} CSP aims to predict the lattice matrix $\mathbf{L}$ and the fractional coordinate matrix $\mathbf{F}$ from the chemical composition $\mathbf{A}$. The goal is to learn the joint conditional distribution $p(\mathbf{L}, \mathbf{F} | \mathbf{A})$, where $\mathbf{L} \in \mathbb{R}^{3 \times 3}$ is the lattice matrix that defines the unit cell's periodic structure, and $\mathbf{F} \in [0,1)^{3 \times N}$ is the fractional coordinate matrix whose $i$-th column $\mathbf{f}_i$ represents the fractional coordinates of the $i$-th atom. Formally, CSP involves learning $p(\mathbf{L}, \mathbf{F} | \mathbf{A})$ such that the generated lattice and atomic configurations faithfully reproduce the ground-truth crystal structure.

\subsection{Hypergraph Modeling of Crystal Structure}

\begin{definition}[Hypergraph]
Consider a set of vertices \(\mathbf{V} = \{\mathbf{v}_i\}_{i=1}^n\) connected by hyperedges \(\mathbf{E} = \{\mathbf{e}_k\}_{k=1}^m\), forming a hypergraph \(\mathbf{H} = (\mathbf{V}, \mathbf{E})\) consisting of \(n\) vertices and \(m\) hyperedges, where each hyperedge can connect any number of vertices. 
\end{definition}
Let \(\mathbf{X} = [\mathbf{x}_1, \ldots, \mathbf{x}_n]^T \in \mathbb{R}^{n \times d}\) represent the vertex attributes, with \(\mathbf{x}_v \in \mathbb{R}^d\) indicating the attribute of vertex \(v\). The degree of vertex \(v\), denoted as \(d_v\), is defined by \(d_v = |\{\mathbf{e} \in \mathbf{E} : \mathbf{v} \in \mathbf{e}\}|\). Moreover, let \(\mathbf{D}\) and \(\mathbf{D}_e\) represent the diagonal degree matrix for vertices in \(\mathbf{V}\) and the sub-matrix corresponding to vertices within a hyperedge \(\mathbf{e}\), respectively. A more formal definition of hypergraph can be found in \Cref{appendix:Preliminaries}.

Building on the previous definition of hypergraph \citep{kim2022equivariant}, we extend this concept and its key attributes to the field of crystallography. This extension provides a more detailed framework for understanding the structural and dynamic properties of crystalline materials, offering new analytical tools and methodologies that enhance the study of crystal structures.

\begin{definition}[Hypergraph-based Crystal]
Formally, let the hypergraph $\mathbf{H}_{\text{crystal}} = (\mathbf{M}, \mathbf{V}, \mathbf{E})$ represent a crystal unit cell, where $\mathbf{M}$ denotes the unit cell, $\mathbf{V}$ represents the set of atoms (vertices), and $\mathbf{E}$ denotes the set of hyperedges capturing the bonds or interactions between these atoms, potentially involving more than two atoms simultaneously.
\end{definition}

\begin{remark}
The hypergraph-based framework represents crystal lattices by mapping atoms to vertices and multi-atom interactions to hyperedges, enabling the modeling of complex relationships beyond pairwise bonding. This representation facilitates the analysis of topological properties, collective atomic behaviors, and coordination geometries within crystalline materials.
\end{remark}

Traditional pairwise representations fundamentally miss critical multi-body effects in crystals:

\noindent(1) \textbf{Coordination Polyhedra}: In spinel structures (\ce{AB2O4}), the tetrahedral (\ce{AO4}) and octahedral (\ce{BO6}) coordination cannot be decomposed into independent pairwise bonds---the collective geometry determines stability.\\
(2) \textbf{Ring Structures}: In zeolites, the stability depends on complete \ce{Si-O-Si} ring closures (4-, 6-, or 8-membered rings), not individual \ce{Si-O} bonds.\\
(3) \textbf{Jahn--Teller Distortions}: The cooperative distortion in perovskites involves correlated movements of multiple atoms that pairwise models cannot capture.

Hyperedges naturally encode these multi-body correlations: a single hyperedge $e = \{\text{B}, \text{O}_1, \text{O}_2, \text{O}_3, \text{O}_4, \text{O}_5, \text{O}_6\}$ represents the complete octahedral coordination, enabling the model to learn collective geometric constraints.

\subsection{Symmetries of Crystal Structure}\label{section:Symmetries}

Crystals differ fundamentally from molecules in that they consist of a repeating unit cell organized within a regular three-dimensional lattice. This periodic structure endows crystals with unique symmetries, including invariance under the orthogonal group $O(3)$ and periodic translational invariance within the fractional coordinate system. These symmetries play a critical role in defining the physical and quantum mechanical properties of crystalline materials, distinguishing them from discrete molecular entities.

\begin{definition}[Atom Permutation Invariance]
For any permutation $\mathbf{P} \in \mathbb{S}_{N}$ (where $\mathbb{S}_{N}$ represents the symmetric group of degree $N$), the distribution \(p(\mathbf{L}, \mathbf{F} | \mathbf{A})\) remains invariant under the transformation \(p(\mathbf{L}, \mathbf{F}\mathbf{P} | \mathbf{A}\mathbf{P})\). This indicates that reordering the atoms does not affect the underlying distribution, implying the inherent permutation symmetry of the atomic configuration.
\end{definition}

\begin{definition}[$O(3)$ Invariance]
    For any orthogonal transformation \(\mathbf{Q} \in \mathbb{R}^{3 \times 3}\) satisfying \(\mathbf{Q}^{\top} \mathbf{Q} = \mathbf{I}\), the distribution \(p(\mathbf{Q} \mathbf{L}, \mathbf{F} | \mathbf{A})\) is equivalent to \(p(\mathbf{L}, \mathbf{F} | \mathbf{A})\). This invariance indicates that any rotation or reflection of the lattice \(\mathbf{L}\) does not alter the distribution.
\end{definition}

\begin{definition}[Periodic Translation Invariance]
    For any translation \(\mathbf{T} \in \mathbb{R}^{3}\), the distribution \(p(\mathbf{L}, w(\mathbf{F} + \mathbf{T} \mathbf{1}^{\top}) | \mathbf{A})\) is equivalent to \(p(\mathbf{L}, \mathbf{F} | \mathbf{A})\). Here, the function \(w(\mathbf{F}) = \mathbf{F} - \lfloor \mathbf{F} \rfloor \in [0, 1)^{3 \times N}\) returns the fractional part of each element in \(\mathbf{F}\), and \(\mathbf{1} \in \mathbb{R}^{3}\) is a vector with all elements set to one. This invariance indicates that any periodic translation of \(\mathbf{F}\) does not alter the distribution \(p(\mathbf{L}, \mathbf{F} | \mathbf{A})\).
\end{definition}

\begin{proposition}
    The hypergraph-based crystal representation $\mathbf{H}_{\text{crystal}} = (\mathbf{M}, \mathbf{V}, \mathbf{E})$ is permutation invariant: for any permutation $\mathbf{P} \in \mathbb{S}_{N}$ acting on the vertex set $\mathbf{V}$, the total potential energy $U(\mathbf{H}_{\text{crystal}})$ satisfies $U(\mathbf{P} \cdot \mathbf{H}_{\text{crystal}}) = U(\mathbf{H}_{\text{crystal}})$, since the energy contributions from hyperedges depend only on the unordered set of participating atoms, not on their labeling.
\end{proposition}

\begin{remark}
The crystal symmetries defined above are consistent with those in the Cartesian coordinate system \cite{xie2021crystal, yan2022periodic}. Specifically, rotations or reflections of the lattice $\mathbf{L}$ do not affect the fractional coordinates of atoms. Consequently, the full $E(3)$ invariance in Cartesian coordinates reduces to $O(3)$ invariance of $\mathbf{L}$ alone in fractional coordinates. Moreover, since the fractional coordinate system inherently encodes the periodicity of the crystal structure, periodic translation invariance admits a more direct formulation without introducing corner atoms, as previously required under Cartesian coordinates.
\end{remark}

\subsection{Why Hypergraphs? Comparison with Alternative Representations}
\label{sec:why_hypergraph}
Before proceeding with our hypergraph-based approach, we briefly contrast it with alternative high-order representations to justify our modeling choice. 

Subgraph-based methods enumerate recurring local patterns such as cliques to capture multi-body interactions, but face fundamental scalability challenges as the enumeration space grows combinatorially with coordination shell size. Motif-based approaches identify chemically meaningful structural units and work effectively for molecular systems with well-defined functional groups, yet struggle with crystalline materials where coordination geometries vary continuously and structural motifs frequently span periodic unit cell boundaries. Persistent homology methods offer powerful tools for capturing multi-scale topological features, but they aggregate local geometric information into global invariants, thereby discarding the atom-level positional details that are essential for generative structure prediction.

Hypergraphs address these limitations through a fundamentally different representation strategy. Rather than decomposing multi-body interactions into collections of lower-order terms or abstracting them into topological summaries, hypergraphs directly encode coordination environments as unified mathematical objects. This native representation of multi-atom relationships preserves the geometric and chemical integrity of local structures while maintaining computational tractability through sparse connectivity. Moreover, the mathematical framework of hypergraphs naturally accommodates the symmetry requirements of crystalline materials, allowing equivariant message passing to respect both permutation invariance and periodic translation symmetry within a single coherent formalism.

\section{The Proposed Model: EH-Diff}\label{section:EH-Diff}

\subsection{Diffusion Model on Crystal Structure}

{\bf Diffusion on $\mathbf{F}$.} To respect the bounded periodic domain of $\mathbf{F}$, we employ the score-based generative model \cite{song2020improved} with the wrapped normal distribution\footnote{The wrapped normal distribution extends the standard normal to periodic domains by summing over all integer translations, ensuring that $\mathbf{f}$ and $\mathbf{f} + \mathbf{k}$ ($\mathbf{k} \in \mathbb{Z}^3$) are treated as equivalent positions.} \cite{de2022riemannian}, which has proven effective for molecular conformer generation \cite{jing2022torsional}. The forward diffusion process is defined as $q(\mathbf{F}_{t} | \mathbf{F}_{0}) \propto \sum_{\mathbf{Z} \in \mathbb{Z}^{3} \times N} \exp\left(-\frac{\|\mathbf{F}_{t} - \mathbf{F}_{0} + \mathbf{Z}\|_{F}^{2}}{2 \sigma_{t}^{2}}\right)$. Here, $\sigma_{t}$ is an exponential annealing schedule, and $\mathbf{F}_{t} = w(\mathbf{F}_{0} + \sigma_{t} \epsilon_{\mathbf{F}})$ with $\epsilon_{\mathbf{F}} \sim \mathcal{N}(\mathbf{0}, \mathbf{I})$. $q(\mathbf{F}_{t} | \mathbf{F}_{0})$ approaches the uniform distribution $\mathcal{U}(\mathbf{0}, 1)$ when $T$ is sufficiently large. For the backward process, we apply the predictor-corrector method \citep{song2020improved} to reconstruct $\mathbf{F}_{0}$ from $\mathbf{F}_{T} \sim \mathcal{U}(\mathbf{0}, 1)$. The estimated score $\hat{\epsilon}_{\mathbf{F}}$ is obtained from the denoising model $\phi_{\theta}(\mathbf{A}, \mathbf{F}_{t}, \mathbf{L}_{t}, t)$. The training loss for score matching is given as:
\begin{align}\label{equation:lossF}
\scaleeq[0.96]{\mathcal{L}_{\mathbf{F}} = \mathbb{E}_{\mathbf{F}_{t} \sim q(\mathbf{F}_{t} | \mathbf{F}_{0}), t \sim \mathcal{U}(1, T)}
\Big{[} \lambda_{t} \left\| \nabla_{\mathbf{F}_{t}} \log q\left(\mathbf{F}_{t} | \mathbf{F}_{0}\right) - \hat{\epsilon}_{\mathbf{F}}\left(C_{t}, t\right) \right\|_{2}^{2} \Big{]}}
\end{align}
where $\lambda_{t} = 1 / \mathbb{E}_{\mathbf{F}_{t}} \left[ \left\| \nabla_{\mathbf{F}_{t}} \log q\left(\mathbf{F}_{t} | \mathbf{F}_{0}\right) \right\|_{2}^{2} \right]$ can be approximated by Monte Carlo sampling methods. 

\noindent{\bf Diffusion on $\mathbf{L}$.} We employ the Denoising Diffusion Probabilistic Model (DDPM) \cite{ho2020denoising} to diffuse and generate $\mathbf{L}$. The forward diffusion process is defined as a Markov chain with the transition kernel $q(\mathbf{L}_{t} | \mathbf{L}_{t-1}) = \mathcal{N}(\mathbf{L}_{t}; \sqrt{1-\beta_{t}} \mathbf{L}_{t-1}, \beta_{t} \mathbf{I})$, where the variance schedule $\{\beta_{t}\}_{t=1}^{T}$ is fixed with $\beta_{t} \in (0,1)$. Let $\alpha_t = 1 - \beta_t$ and $\bar{\alpha}_t = \prod_{s=1}^{t} \alpha_s$.

This process progressively transforms $L_0$ into the normal prior distribution $p(\mathbf{L}_T) = \mathcal{N}(0, \mathbf{I})$ for a sufficiently large $T$, since $q(\mathbf{L}_t | \mathbf{L}_0) \\ = \mathcal{N}(\mathbf{L}_t; \sqrt{\alpha_t} \mathbf{L}_0, (1 - \alpha_t) \mathbf{I})$. For the reverse generation process, we model the reverse dynamics as a conditional Markov chain with learnable transitions $p_{\theta}(\mathbf{L}_{t-1} | \mathbf{L}_t) = \mathcal{N}(\mathbf{L}_{t-1}; \mu_{\theta}(\mathbf{L}_t, t), \Sigma_{\theta}(\mathbf{L}_t, t))$. Here, the mean $\mu_{\theta}$ is given by $\mu_{\theta}(\mathbf{L}_t, t) = \frac{1}{\sqrt{\alpha_t}} \left( \mathbf{L}_t - \frac{\beta_t}{\sqrt{1-\alpha_t}} \hat{\epsilon}_{\mathbf{L}}(\mathbf{L}_t, t) \right)$ and the denoising term $\hat{\epsilon}_{\mathbf{L}}$ is obtained from the denoising model $\phi_{\theta}(\mathbf{A}, \mathbf{F}_t, \mathbf{L}_t, t)$, which will be detailed later. The training loss is formulated as the expected $\ell_2$-distance between the ground truth $\epsilon_{\mathbf{L}}$ and the estimated $\hat{\epsilon}_{\mathbf{L}}$:
\begin{align}\label{equation:lossL}
\mathcal{L}_{\mathbf{L}} = \mathbb{E}_{\epsilon_{\mathbf{L}} \sim \mathcal{N}(0, \mathbf{I}), t \sim \mathcal{U}(1, T)} \left[ \| \epsilon_{\mathbf{L}} - \hat{\epsilon}_{\mathbf{L}}(\mathbf{L}_t, t) \|_2^2 \right].
\end{align}

\subsection{Denoising Model}\label{section:Denoising}

Here we introduce our proposed denoising model $\phi(\mathbf{L}, \mathbf{F}, \mathbf{A}, t)$ that produces $\hat{\epsilon}_{\mathbf{L}}$ and $\hat{\epsilon}_{\mathbf{F}}$. Let $\mathbf{H}^{(\ell)} = \left[\mathbf{h}_{1}^{(\ell)}, \mathbf{h}_{2}^{(\ell)}, \cdots, \mathbf{h}_{N}^{(\ell)}\right]$ denote the vertex representations of the $\ell$-th layer. The input feature for vertex $i$ is given by $\mathbf{h}_{i}^{(0)} = \operatorname{MLP}(f_{\text{atom}}(\mathbf{a}_{i}), f_{\text{position}}(t))$, where $f_{\text{atom}}$ and $f_{\text{position}}$ are the atomic embedding and sinusoidal positional encoding \cite{ho2020denoising}, respectively.

Inspired by the Equivariant Hypergraph Neural Networks (EHNN) framework proposed by \citet{kim2022equivariant}, we construct our model using an unfolded layer architecture. The $\ell$-th layer of this framework (EHNN-MLP) can be formulated as follows:
\begin{align}\label{equation:MP}
\left\{
\begin{array}{l}
\mathbf{m}_{ \mathbf{e} }^{(\ell)} = \operatorname{MLP}\left( \{\mathbf{h}_{j}^{(\ell-1)}\}_{j\in \mathbf{e}}, \mathbf{L}^{\top}\mathbf{L}, \psi\!\left( \sum_{j,j' \in \mathbf{e}} (f_{j'}-f_{j})\right)\right),\\
\mathbf{m}_{i}^{(\ell)} = \sum_{\mathbf{e} \ni i} \frac{\mathbf{m}_{\mathbf{e}}^{(\ell)}}{|\mathbf{e}|},\\
\mathbf{h}_{i}^{(\ell)} = \mathbf{h}_{i}^{(\ell-1)} + \operatorname{MLP}\left(\mathbf{h}_{i}^{(\ell-1)}, \mathbf{m}_{i}^{(\ell)}\right),
\end{array}
\right.
\end{align}
where $\mathbf{e}$ denotes a hyperedge, $\psi:(-1,1)^3 \to [-1,1]^{3 \times K}$ is a Fourier feature function with $K$ frequency components, and $\psi$ is periodic translation invariant.

After $\mathcal{L}$ layers of message passing conducted on the fully connected hypergraph, the lattice noise $\hat{\epsilon}_{\mathbf{L}}$ is obtained via a linear combination of $\mathbf{L}$ weighted by the output of the final layer, and the fractional coordinate score $\hat{\epsilon}_{\mathbf{F}}$ is computed column-wise. Formally:
\begin{align}\label{equation:epsilonF}
\hat{\epsilon}_{\mathbf{L}} = \mathbf{L} \varphi \left( \frac{1}{N} \sum_{i=1}^{N} \mathbf{h}_{i}^{(\mathcal{L})} \right), \quad \hat{\epsilon}_{\mathbf{F}} = \Big{[}\varphi_{\mathbf{F}}\!\left( \mathbf{h}_i^{(\mathcal{L})} \right)\Big{]}_{i=1}^N,
\end{align}
where $\varphi$ is an MLP with output in $\mathbb{R}^{3 \times 3}$, $\hat{\epsilon}_{\mathbf{F}}[:, {i}]$ denotes the $i$-th column of $\hat{\epsilon}_{\mathbf{F}}$, and $\varphi_{\mathbf{F}}$ is an MLP applied to the final vertex representation.

\begin{proposition}\label{proposition:3.1}
Propositions regarding $\hat{\epsilon}_\mathbf{L}$ and $\hat{\epsilon}_\mathbf{F}$:
\begin{itemize}
    \item $\hat{\epsilon}_\mathbf{L}$ is $O(3)$ equivariant;
    \item $\hat{\epsilon}_\mathbf{F}$ is periodic translation invariant.
\end{itemize}
Consequently, the generated distribution by EH-Diff is periodic $E(3)$ invariant.
\end{proposition}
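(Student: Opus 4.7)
The plan is to establish the two bullets separately by induction on the layer index $\ell$, and then combine them with a standard equivariant-diffusion argument to obtain the claimed periodic $E(3)$ invariance of the generated distribution.

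For the first bullet, I would show by induction on $\ell$ that every hidden state $\mathbf{h}_i^{(\ell)}$ produced by \myref{equation:MP} is $O(3)$ \emph{invariant} under $\mathbf{L}\mapsto\mathbf{Q}\mathbf{L}$ with $\mathbf{Q}^{\top}\mathbf{Q}=\mathbf{I}$. The base case is immediate because $\mathbf{h}_i^{(0)}$ depends only on atom type and time. For the inductive step, the only $\mathbf{L}$-dependent argument of the message MLP is the Gram matrix $\mathbf{L}^{\top}\mathbf{L}$, and $(\mathbf{Q}\mathbf{L})^{\top}(\mathbf{Q}\mathbf{L})=\mathbf{L}^{\top}\mathbf{L}$; all other arguments (previous features by induction, fractional coordinate differences, atom and time embeddings) are untouched. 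Invariance of the pooled representation then passes through $\varphi$ in \myref{equation:epsilonF}, leaving a single explicit left factor of $\mathbf{L}$, which is exactly what upgrades invariance to equivariance: $\hat{\epsilon}_{\mathbf{L}}\mapsto\mathbf{Q}\mathbf{L}\,\varphi(\cdots)=\mathbf{Q}\hat{\epsilon}_{\mathbf{L}}$.

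For the second bullet, I would run an analogous layer-by-layer induction under $\mathbf{F}\mapsto w(\mathbf{F}+\mathbf{T}\mathbf{1}^{\top})$. The only input affected is the sum of pairwise differences $\sum_{i,j\in\mathbf{i}}(f_j-f_i)$, so the crux is showing that this quantity feeds into $\psi$ in a way the shift leaves invariant. By the stated periodicity of $\psi$, integer offsets in any coordinate are absorbed, so $\mathbf{h}_i^{(\ell)}$ remains invariant. Applying the column-wise MLP $\varphi_F$ preserves this, giving invariance of $\hat{\epsilon}_{\mathbf{F}}$. The third part of the statement then follows from the standard equivariant score-based/DDPM transfer principle: if the terminal prior is invariant under a group action and the noise predictor is invariant/equivariant under the same action at every step, the reverse chain marginals inherit that invariance. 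Here the lattice prior $\mathcal{N}(0,\mathbf{I})$ is $O(3)$-invariant and the fractional prior $\mathcal{U}(\mathbf{0},\mathbf{1})$ is periodic-translation-invariant; permutation invariance comes for free from the symmetric EHNN aggregation. By the Remark of \Cref{section:Symmetries}, these three symmetries together coincide with periodic $E(3)$ invariance in the Cartesian picture.

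The main obstacle is the subtlety in the second bullet: the naive shortcut ``differences $f_j-f_i$ are translation invariant'' fails at unit-cell boundaries, because a shift $\mathbf{T}$ can wrap one atom while leaving another inside the cell, producing nonzero integer jumps in the differences. The argument must route this ambiguity through the periodicity of $\psi$ rather than claim pointwise invariance of the differences, and so the proof hinges on verifying that the chosen Fourier-type $\psi$ is genuinely invariant under integer shifts of each of its arguments. Once that is in place, the rest of the proposition reduces to a bookkeeping induction combined with the off-the-shelf invariance transfer for diffusion.
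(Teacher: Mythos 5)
Your proposal is correct and follows essentially the same route as the paper's proof: orthogonal invariance of the Gram matrix $\mathbf{L}^{\top}\mathbf{L}$ plus the explicit left factor of $\mathbf{L}$ in \myref{equation:epsilonF} gives $O(3)$ equivariance of $\hat{\epsilon}_{\mathbf{L}}$, periodicity of the Fourier map $\psi$ absorbs the integer offsets created by the wrapping $w(\cdot)$ to give translation invariance of $\hat{\epsilon}_{\mathbf{F}}$, and the standard equivariant-diffusion transfer lemma lifts these to invariance of the generated distribution. If anything, your explicit layer-wise induction and your flagging of the unit-cell wrapping subtlety are more careful than the paper's own computation (whose displayed identities for $\psi$ contain sign/function typos but rest on exactly the periodicity fact you isolate).
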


\begin{proof}[Proof Sketch]
\textit{(i)} The $O(3)$-equivariance of $\hat{\epsilon}_{\mathbf{L}}$ follows from the invariance of the Gram matrix $\mathbf{L}^{\top}\mathbf{L}$ under orthogonal transformations and the linear dependence of the output on $\mathbf{L}$.
\textit{(ii)} The periodic translation invariance of $\hat{\epsilon}_{\mathbf{F}}$ is ensured by the Fourier features $\psi$, which use $\sin$ and $\cos$ with integer frequencies---these are invariant to integer shifts arising from the wrapping function $w(\cdot)$.
\end{proof}

\begin{proposition}\label{proposition:3.2}
If $\hat{\epsilon}_\mathbf{L}$ is $O(3)$-equivariant, then the learned marginal distribution $p_\theta(\mathbf{L}_0)$ is $O(3)$-invariant. Similarly, if $\hat{\epsilon}_\mathbf{F}$ is periodic translation invariant, the learned marginal distribution $p_\theta(\mathbf{F}_0)$ is also periodic translation invariant.
\end{proposition}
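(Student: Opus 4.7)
The plan is to invoke the standard recipe for equivariant generative models: if the terminal prior is invariant under a group action and every learned reverse-time transition kernel is equivariant under the same action, then each reverse marginal, in particular $p_\theta(\mathbf{L}_0)$ and $p_\theta(\mathbf{F}_0)$, is invariant. The two claims thus decouple into verifying the same two ingredients for $(O(3), \mathbf{L})$ and for the periodic translation group acting on $\mathbf{F}$.

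\textbf{Case of $\mathbf{L}$.} First I would observe that the DDPM prior $p(\mathbf{L}_T) = \mathcal{N}(0, \mathbf{I})$ depends on $\mathbf{L}_T$ only through $\|\mathbf{L}_T\|_F^2$, which is preserved by $\mathbf{L}_T \mapsto \mathbf{Q}\mathbf{L}_T$ for any $\mathbf{Q} \in O(3)$, so $p(\mathbf{L}_T)$ is $O(3)$-invariant. Next, from the explicit expression $\mu_\theta(\mathbf{L}_t, t) = \frac{1}{\sqrt{\alpha_t}}\bigl(\mathbf{L}_t - \frac{\beta_t}{\sqrt{1-\alpha_t}}\hat{\epsilon}_\mathbf{L}(\mathbf{L}_t, t)\bigr)$ and the hypothesized equivariance $\hat{\epsilon}_\mathbf{L}(\mathbf{Q}\mathbf{L}_t, t) = \mathbf{Q}\hat{\epsilon}_\mathbf{L}(\mathbf{L}_t, t)$, one obtains $\mu_\theta(\mathbf{Q}\mathbf{L}_t, t) = \mathbf{Q}\mu_\theta(\mathbf{L}_t, t)$; combined with the isotropic covariance and $|\det \mathbf{Q}| = 1$, this yields $p_\theta(\mathbf{Q}\mathbf{L}_{t-1} \mid \mathbf{Q}\mathbf{L}_t) = p_\theta(\mathbf{L}_{t-1} \mid \mathbf{L}_t)$. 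A straightforward induction on $t$ --- writing
\[
p_\theta(\mathbf{Q}\mathbf{L}_{t-1}) = \int p_\theta(\mathbf{Q}\mathbf{L}_{t-1} \mid \mathbf{L}_t)\, p_\theta(\mathbf{L}_t)\, d\mathbf{L}_t
\]
and changing variables $\mathbf{L}_t \to \mathbf{Q}\mathbf{L}_t$ using the inductive hypothesis that $p_\theta(\mathbf{L}_t)$ is $O(3)$-invariant --- then propagates invariance from $t=T$ down to $t=0$.

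\textbf{Case of $\mathbf{F}$.} The analogous argument uses the uniform prior on the torus $[0,1)^{3 \times N}$, which is the $T \to \infty$ limit of the wrapped-normal forward kernel and is trivially invariant under $\mathbf{F} \mapsto w(\mathbf{F} + \mathbf{T}\mathbf{1}^\top)$. Each predictor-corrector reverse step has the schematic form $\mathbf{F}_{t-1} = w(\mathbf{F}_t + s(\mathbf{F}_t, t) + \sigma\,\epsilon)$, where the drift $s$ is built from $\hat{\epsilon}_\mathbf{F}$. Using the quotient-map identity $w(\mathbf{F} + \mathbf{T}\mathbf{1}^\top + \mathbf{Z}) = w(w(\mathbf{F} + \mathbf{T}\mathbf{1}^\top) + \mathbf{Z})$ together with the hypothesized invariance of $\hat{\epsilon}_\mathbf{F}$ under $\mathbf{F}_t \mapsto w(\mathbf{F}_t + \mathbf{T}\mathbf{1}^\top)$, one checks that the transition kernel satisfies $p_\theta(w(\mathbf{F}_{t-1} + \mathbf{T}\mathbf{1}^\top) \mid w(\mathbf{F}_t + \mathbf{T}\mathbf{1}^\top)) = p_\theta(\mathbf{F}_{t-1} \mid \mathbf{F}_t)$; the same inductive change-of-variables argument then yields periodic translation invariance of $p_\theta(\mathbf{F}_0)$.

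\textbf{Main obstacle.} The delicate part is the $\mathbf{F}$ case, because the reverse dynamics live on the flat torus $\mathbb{R}^{3\times N}/\mathbb{Z}^{3\times N}$ while being implemented through the wrapping function $w$. Verifying that periodic translation commutes with $w$ and with each score-based predictor-corrector update --- i.e.\ that the wrapped normal kernel is genuinely well-defined on the quotient and that both drift and noise respect the translation action --- is where essentially all the real work sits. Once this compatibility is established, propagation of invariance from the uniform prior down to $p_\theta(\mathbf{F}_0)$ is the same inductive argument as in the $\mathbf{L}$ case.
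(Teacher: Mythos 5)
Your proposal is correct and follows essentially the same route as the paper: establish that the terminal prior is invariant and each learned reverse transition is equivariant (via the explicit mean formula, the isotropic covariance, and $\lvert\det \mathbf{Q}\rvert = 1$), then propagate invariance down to $t=0$ --- the paper packages that last propagation step as a cited lemma from DiffCSP, whereas you prove the induction inline. You in fact go slightly further than the paper, which writes out only the $\mathbf{L}$ case and leaves the periodic-translation argument for $\mathbf{F}_0$ implicit; your torus-quotient treatment of the wrapped reverse step supplies that missing half.
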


\begin{proof}[Proof Sketch]
\textit{(i) $O(3)$-invariance of $p_\theta(\mathbf{L}_0)$:} The prior $\mathcal{N}(\mathbf{0}, \mathbf{I})$ is $O(3)$-invariant. By the $O(3)$-equivariance of $\hat{\epsilon}_{\mathbf{L}}$, the reverse transition mean satisfies $\boldsymbol{\mu}_t(\mathbf{Q}\mathbf{L}_t) = \mathbf{Q}\boldsymbol{\mu}_t(\mathbf{L}_t)$, making the transition $O(3)$-equivariant. The result follows from \Cref{lemma:invariance}.
\textit{(ii) Periodic translation invariance of $p_\theta(\mathbf{F}_0)$:} The uniform prior on $[0,1)^{3 \times N}$ is periodic translation invariant. Since $\hat{\epsilon}_{\mathbf{F}}$ is also invariant, the Langevin dynamics preserves this symmetry throughout the reverse process.
\end{proof}

The full proofs of \Cref{proposition:3.1} and \Cref{proposition:3.2} can be found in \Cref{appendix:proofs}.

\begin{algorithm}[tb]
\caption{EH-Diff Training Procedure}
\label{alg:Training}
\SetAlgoLined
\KwIn{atom features $\mathbf{A}$, fractional coordinates $\mathbf{F}$, lattice $\mathbf{L}$, number of diffusion steps $T_{\text{diffusion}}$.}
Sample $\epsilon_{\mathbf{L}} \sim \mathcal{N}(0, \mathbf{I})$\;
Sample $\epsilon_{\mathbf{F}} \sim \mathcal{N}(0, \mathbf{I})$\;
Sample $t \sim \mathcal{U}(1, T_{\text{diffusion}})$\;
Update $\mathbf{L}_t \gets \sqrt{\bar{\alpha}_t} \mathbf{L}_0 + \sqrt{1 - \bar{\alpha}_t} \epsilon_{\mathbf{L}}$\;
Update $\mathbf{F}_t \gets w \left( \mathbf{F}_0 + \sigma_t \epsilon_{\mathbf{F}} \right)$\;
Update $\hat{\epsilon}_{\mathbf{L}}, \hat{\epsilon}_{\mathbf{F}} \gets \phi_{\theta} (\mathbf{A}, \mathbf{F}_t, \mathbf{L}_t, t)$, see Eq. \eqref{equation:epsilonF}\;
Minimize $\mathcal{L}_{\mathbf{L}} + \mathcal{L}_{\mathbf{F}}$, see Eq. \eqref{equation:lossF} and \eqref{equation:lossL}\;
\KwOut{Model $\phi_{\theta}$}
\end{algorithm}

\begin{table}[b]
\centering
\caption{Computational complexity comparison between EH-Diff and pairwise GNN baselines.}
\label{tab:complexity}
\begin{tabular}{lcc}
\toprule
\textbf{Method} & \textbf{Time Complexity} & \textbf{Typical Scale} \\
\midrule
Pairwise GNN& $\mathcal{O}(n^2 \cdot d^2)$ & $n^2$ edges \\
EH-Diff (ours) & $\mathcal{O}(m \cdot \bar{k} \cdot d^2)$ & $m \approx 3\text{--}5n$, $\bar{k} \approx 4$ \\
\bottomrule
\end{tabular}
\end{table}

\begin{table*}[t]
\small
\centering
\caption{{Results on Stable Structure Prediction Task.} \emph{MR} denotes the match rate. We use \textbf{boldface} to denote the best result and \underline{underline} to denote the second best result. ``-" indicates missing data in original papers.
}
\label{table:results}
\begin{tabular}{lccccccccc}
\toprule
\multirow{2}{*}{\textbf{Method}} & \multirow{2}{*}{\emph{num of Samples}} & \multicolumn{2}{c}{\textbf{Perov-5}} & \multicolumn{2}{c}{\textbf{Carbon-24}} & \multicolumn{2}{c}{\textbf{MP-20}} & \multicolumn{2}{c}{\textbf{MPTS-52}} \\
\cmidrule(lr){3-4} \cmidrule(lr){5-6} \cmidrule(lr){7-8} \cmidrule(lr){9-10}
 & & \emph{MR}\;$\uparrow$ & \emph{RMSE}\;$\downarrow$ & \emph{MR}\;$\uparrow$ & \emph{RMSE}\;$\downarrow$ & \emph{MR}\;$\uparrow$ & \emph{RMSE}\;$\downarrow$ & \emph{MR}\;$\uparrow$ & \emph{RMSE}\;$\downarrow$ \\
\midrule
\multirow{2}{*}{\textbf{RS}}
& 20 & 29.22 & 0.2924 & 14.63 & 0.4041 & 8.73 & 0.2501 & 2.05 & 0.3329 \\
& 5,000 & 36.56 & 0.0886 & 14.63 & 0.4041 & 11.49 & 0.2822 & 2.68 & 0.3444 \\
\midrule
\multirow{2}{*}{\textbf{BO}}
& 20 & 21.03 & 0.2830 & 0.44 & 0.3653 & 8.11 & 0.2402 & 2.05 & 0.3024 \\
& 5,000 & 55.09 & 0.2037 & 12.17 & 0.4089 & 12.68 & 0.2816 & 6.69 & 0.3444 \\
\midrule
\multirow{2}{*}{\textbf{PSO}}
& 20 & 20.90 & 0.0836 & 6.40 & 0.4204 & 4.05 & 0.1567 & 1.06 & 0.2339 \\
& 5,000 & 21.88 & 0.0844 & 6.50 & 0.4211 & 4.35 & 0.1670 & 1.09 & 0.2390 \\
\midrule
\midrule
\multirow{1}{*}{\textbf{P-cG-SchNet}}
& 1 & 48.22 & 0.4179 & 17.29 & 0.3846 & 15.39 & 0.3762 & 3.67 & 0.4115 \\
\midrule
\multirow{1}{*}{\textbf{CDVAE}}
& 1 & 45.31 & 0.1138 & 17.09 & 0.2969 & 33.90 & 0.1045 & 5.34 & 0.2106 \\
\midrule
\multirow{1}{*}{\textbf{EquiCSP}}
& 1 & 52.02 & \underline{0.0707} & - & - &\underline{57.59} &\underline{0.0510} &\underline{14.85} &\textbf{0.1169} \\
\midrule
\multirow{1}{*}{\textbf{DiffCSP}}
& 1 & 52.02 & 0.0760 & \underline{17.54} & 0.2759 & 51.49 & 0.0631 & 12.19 & 0.1786 \\
\midrule
\midrule
 \multirow{1}{*}{\textbf{EH-Diff} (Sphere)} 
& 1 
& \textbf{54.59} 
& 0.0728
& {17.28} 
& \textbf{0.2623}
& 56.81 
& 0.0559
& 13.97 
& \underline{0.1285}\\
\midrule
 \multirow{1}{*}{\textbf{EH-Diff} (Cube)} 
& 1 
& \underline{53.43} 
& \textbf{0.0706}
& \textbf{18.11} 
& \underline{0.2757}
& \textbf{58.12} 
& \textbf{0.0460}
& \textbf{16.99} 
& 0.1340\\ 
\bottomrule
\end{tabular}
\end{table*}

\textbf{Computational Complexity.} Let $n$ denote the number of atoms and $m$ the number of hyperedges. The message passing in \Cref{equation:MP} has time complexity $\mathcal{O}(m \cdot \bar{k} \cdot d^2)$, where $\bar{k}$ is the average hyperedge size and $d$ is the hidden dimension. For comparison, pairwise GNNs operating on fully connected graphs have complexity $\mathcal{O}(n^2 \cdot d^2)$. In practice, our method achieves comparable or better efficiency since $m \ll n^2$; empirically, we observe $m \approx 3\text{--}5n$ across all datasets. \Cref{tab:complexity} provides a detailed comparison.

\subsection{CSP Workflow}

{\bf Training via Hypergraph Diffusion.} \Cref{alg:Training} outlines the training procedure for the EH-Diff model. The process begins by taking as input the atom features, fractional coordinates, and lattice of the crystal, along with the number of diffusion steps. Initially, noise is sampled for both the lattice and the fractional coordinates from a normal distribution. A random diffusion step is selected from a uniform distribution. The lattice and coordinates are then updated using their respective diffusion equations, incorporating noise to approximate their final states. Next, the model parameters are updated using a denoising function, and the objective is to minimize the combined loss functions for the lattice and fractional coordinates. This iterative training process continues until the model effectively captures the desired crystal properties, thus providing a learned representation capable of predicting complex crystal structures.

\begin{algorithm}[tb]
\caption{EH-Diff Inference Procedure}
\label{alg:Inference}
\SetAlgoLined
\KwIn{atom features $\mathbf{A}$, number of diffusion steps $T_{\text{diffusion}}$, trained model $\phi_\theta$, hyperparameters $\bar{\alpha}_t, \tilde{\beta}_t$.}
Sample $\mathbf{L}_T \sim \mathcal{N}(\mathbf{0}, \mathbf{I})$, $\mathbf{F}_T \sim \mathcal{U}(\mathbf{0}, 1)$\;
\For{$t = T_{\text{diffusion}}$ \KwTo $1$}{
    \If{$t > 1$}{
        Sample $\epsilon_{\mathbf{L}} \sim \mathcal{N}(\mathbf{0}, \mathbf{I})$, $\epsilon_{\mathbf{F}} \sim \mathcal{N}(\mathbf{0}, \mathbf{I})$\;
    }
    $\hat{\epsilon}_{\mathbf{L}}, \hat{\epsilon}_{\mathbf{F}} \gets \phi_\theta (\mathbf{A}, \mathbf{F}_t, \mathbf{L}_t, t)$ via trained model $\phi_\theta$\;
    Update $\mathbf{L}_{t-1}$ via reverse DDPM using $\hat{\epsilon}_{\mathbf{L}}$, $\bar{\alpha}_t$, $\tilde{\beta}_t$\;
    Update $\mathbf{F}_{t-1} \gets w\!\left(\mathbf{F}_t + \eta_t \hat{\epsilon}_{\mathbf{F}} + \sqrt{2\eta_t}\,\epsilon_{\mathbf{F}}\right)$ via predictor-corrector\;
}
\KwOut{Predicted lattice $\hat{\mathbf{L}}_0$ and fractional coordinates $\hat{\mathbf{F}}_0$}
\end{algorithm}

{\bf Inference.} \Cref{alg:Inference} presents the inference procedure for the EH-Diff model. Given the atom features $\mathbf{A}$, the lattice is initialized from a Gaussian prior and the fractional coordinates from a uniform prior. The reverse diffusion then proceeds iteratively from step $T$ to $1$. At each step, the trained denoising model $\phi_\theta$ predicts the noise components $\hat{\epsilon}_{\mathbf{L}}$ and $\hat{\epsilon}_{\mathbf{F}}$, which are used to update $\mathbf{L}_t$ via the reverse DDPM transition and $\mathbf{F}_t$ via the predictor-corrector method with wrapping. The final outputs are the predicted lattice $\hat{\mathbf{L}}_0$ and fractional coordinates $\hat{\mathbf{F}}_0$.

\section{Experiments}

\subsection{Experimental Setup}

{\bf Dataset.} We conducted experiments on four datasets, each with distinct levels of complexity. The first dataset, Perov-5, comprises 18,928 perovskite materials with structurally similar configurations, containing 5 atoms per unit cell \cite{castelli2012new, castelli2012computational}. The second dataset, Carbon-24, includes 10,153 carbon-based materials, with unit cells ranging from 6 to 24 atoms. The third dataset, MP-20, is derived from the Materials Project and contains 45,231 stable inorganic materials, which include most of the experimentally synthesized compounds with up to 20 atoms per unit cell \cite{jain2013commentary}. Lastly, the MPTS-52 dataset represents a more challenging extension of MP-20, consisting of 40,476 structures with up to 52 atoms per unit cell, ordered chronologically by their earliest publication year. For the Perov-5, Carbon-24 and MP-20 data sets, we adhered to the train/validation/test split ratio of 0.8/0.1/0.1 as suggested by \citet{xie2021crystal}. For the MPTS-52 dataset, a chronological split of 1/2/3 was used for training, validation, and testing, respectively.

\noindent{\bf Baseline.} This study distinguishes itself from previous research by focusing on a comparison of two main methodologies. The first methodology follows a sequential predict-and-optimize strategy. In this approach, a model is initially trained to predict material properties, and then optimization algorithms are employed to determine the most favorable structures. We adapt the approach described by \citet{cheng2022crystal}, using MEGNet \cite{chen2019graph} to predict formation energies. To convert these predictions into optimal structures, we apply three optimization techniques: random search (RS), Bayesian optimization (BO), and particle swarm optimization (PSO), each of which is run for 5,000 iterations to thoroughly explore the solution space. The second methodology utilizes generative models. Building on the enhancements proposed by \citet{xie2018crystal}, we leverage the cG-SchNet framework \cite{gebauer2022inverse}, which incorporates SchNet \cite{schutt2018schnet} as its fundamental model, along with a ground-truth lattice initialization to capture periodic information, resulting in the P-cG-SchNet model. We also explore conditional diffusion-based generative models such as CDVAE \cite{xie2021crystal}, EquiCSP \cite{lin2024equicsp} and DiffCSP \cite{jiao2024crystal}.

\begin{figure*}[t]
    \centering
    \includegraphics[width=\linewidth]{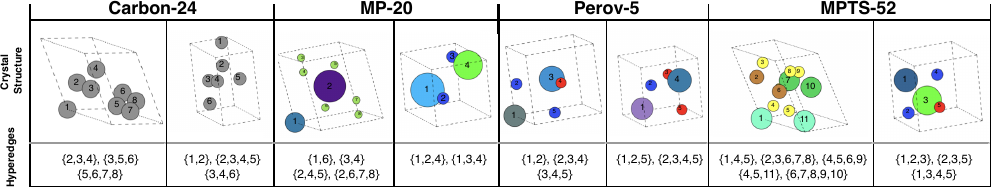}
    \caption{Visualization of the hypergraph-based crystal structure for each dataset with the corresponding hyperedges of order greater than one.}
    \label{fig:demo}
\end{figure*}

\noindent{\bf Evaluation Metrics.} We evaluate the performance of different methods by reporting the Match Rate (\emph{MR}) and Root Mean Square Error (\emph{RMSE}), as shown in \Cref{table:results}. Following established protocols \cite{xie2021crystal, jiao2024crystal}, the assessment involves comparing predicted candidates with ground truth structures. Specifically, for each structure in the test set, we generate $k$ samples of the same chemical composition and determine a match if at least one sample aligns with the ground-truth structure. This matching is carried out using the StructureMatcher class from pymatgen \cite{ong2013python}, with the thresholds set at $\text{stol} = 0.5$, $\text{angle}_{\text{tol}} = 10$, and $\text{ltol} = 0.3$\footnote{These tolerance parameters control the strictness of structural matching in StructureMatcher: \text{stol} (site tolerance) defines the maximum permissible distance between atomic positions after optimal alignment, \text{angletol} (angle tolerance) sets the maximum angular deviation between lattice vectors, and \text{ltol} (length tolerance) specifies the maximum relative difference in lattice parameter magnitudes. Structures are considered matched only if all three criteria are simultaneously satisfied.}. The \emph{MR} is defined as the ratio of successfully matched structures to the total number of structures in the test set. The \emph{RMSE} is calculated between the ground-truth structure and the best-matching candidate, normalized by $\sqrt{\frac{V}{n}}$, where $V$ denotes the lattice volume, and averaged across all matched structures. For optimization methods, we select the 20 lowest-energy structures from the 5,000 generated during testing as the candidates. In contrast, for generative baselines and our EH-Diff approach, we evaluate performance using $k = 1$.

\noindent{\bf Hypergraph Representation of Crystals.} We formalize the crystal-to-hypergraph mapping as follows. Each atom in the unit cell corresponds to a vertex $v \in \mathcal{V}$, and each hyperedge $e \in \mathcal{E}$ encodes a group of atoms that participate in a collective interaction---analogous to cliques or stars in conventional graph representations. This formulation naturally captures multi-body interactions that arise in crystalline materials. \Cref{fig:demo} illustrates representative examples from each dataset. The hypergraph representation is characterized by three key components:

\noindent(1) \textit{Vertices}: Each vertex represents an atom within the unit cell, encoding atomic features such as element type and oxidation state.\\
(2) \textit{Hyperedges}: Each hyperedge connects a subset of atoms involved in collective interactions, including coordination polyhedra, hydrogen bonding networks, and van der Waals clusters.\\
(3) \textit{Edge weights} (optional): Hyperedges can be weighted to reflect interaction strength, assigning higher weights to covalent bonds than to weak dispersive forces.

\noindent{\bf Hyperedge Construction.} The modeling of hyperedges in crystal structures is illustrated in \Cref{fig:EH-Diff-hypergraph}. We employ two geometric primitives---spheres and cubes---to scan localized regions of space and establish hyperedges. A hyperedge is formed when multiple atoms fall within the same scanning volume.

\textit{Sphere-based hyperedges} (left panel) utilize a spherical scanning region centered on each atom. This approach naturally captures isotropic coordination shells (\emph{e.g.}, the first coordination sphere) and provides flexibility for non-uniform atomic distributions.

\textit{Cube-based hyperedges} (right panel) employ cubic scanning regions, which better align with crystallographic symmetry in cubic crystal systems and offer computational efficiency due to their axis-aligned geometry.

The choice between these two methods depends on the specific characteristics of the crystal structure. The scanning radius $r$ is determined by typical bond lengths, set as $r = 1.5 \times \bar{d}_{\text{NN}}$, where $\bar{d}_{\text{NN}}$ denotes the average nearest-neighbor distance in the training set.

\begin{remark}[Symmetry and Periodicity]
Crystals have inherent symmetry and periodicity, which can be modeled in a hypergraph. The repeating unit cell of a crystal can be represented by a subgraph, with vertices for atoms and hyperedges for atomic bond. By applying symmetry operations such as rotations and reflections, the hypergraph can be transformed to generate the entire crystal structure from this subgraph, effectively capturing both the periodicity and symmetry of the crystal in a compact form.
\end{remark}

\begin{table}[t]
\small
\centering
\caption{Overall results over the 15 selected compounds in MP-20.}
\label{table:selectedresults}
\begin{tabular}{lccc}
\toprule
\textbf{Method} & \emph{MR}\;$\uparrow$ & \emph{Avg.RMSE}\;$\downarrow$ & \emph{Avg.Time}\;$\downarrow$ \\
\midrule
USPEX & 53.33 & 0.0159 & 12.5h \\
\midrule
DiffCSP & 73.33 & 0.0172 & \textbf{10s} \\
\midrule\midrule
  EH-Diff (Sphere) 
&  \textbf{82.69} 
&  \textbf{0.0092} 
&  13 $\sim$ 14s\\
\midrule
  EH-Diff (Cube) 
&  79.77 
&  0.0109 
&  13 $\sim$ 14s\\
\bottomrule
\end{tabular}
\end{table}

\begin{table*}[t]
\small
\centering
\caption{\emph{Results on ab initio generation task.} We use \textbf{boldface} to denote the best result and \underline{underline} to denote the second best result. We also use \textcolor{red}{\textbf{red}} to denote the best rank score. ``-" indicates missing data in original papers.
}\label{tab:resultsabinitio}
\begin{tabular}{clcccccccl}
\toprule
\multirow{2}{*}{\textbf{Data}} & \multirow{2}{*}{\textbf{Method}} & \multicolumn{2}{c}{\textbf{Validity(\%)}\;$\uparrow$} & \multicolumn{2}{c}{\textbf{Coverage(\%)}\;$\uparrow$} & \multicolumn{3}{c}{\textbf{Property}\;$\downarrow$} 
&\textbf{Rank}\;\multirow{2}{*}{$\downarrow$}\\
\cmidrule(lr){3-4}
\cmidrule(lr){5-6}
\cmidrule(lr){7-9}
& & Structure & Composition & {COV-Recall}& {COV-Precision} & $d_{\rho}$ & $d_{E}$ & $d_{\text{elem}}$ &\textbf{Score} \\
\midrule
\multirow{10}{*}{\rotatebox{90}{\textbf{Perov-5}}} & FTCP & 0.24 & 54.24 & 0.00 & 0.00 & 10.27 & 156.0 & 0.6297 & 9.9\\
& Cond-DFC-VAE & 73.60 & 82.95 & 73.92 & 10.13 & 2.268 & 4.111 & 0.8373 & 8.4\\
& G-SchNet & 99.92 & 98.79 & 0.18 & 0.23 & 1.625 & 4.746 & 0.0368 & 7.0\\
& P-G-SchNet & 79.63 & \textbf{99.13} & 0.37 & 0.25 & 0.2755 & 1.388 & 0.4552 & 6.7\\
& CDVAE & \textbf{100.0} & 98.59 & 99.45 & 98.46 & 0.1258 & 0.0264 & 0.0628 & 5.0\\
& SyMat & \textbf{100.0} & 97.40 & 99.68 & 98.64 & 0.1893 & 0.2364 & 0.0177 & 4.4\\
& EquiCSP & \textbf{100.0} & 98.60 & 99.60 & \underline{98.76} & \underline{0.1110} & \underline{0.0257} & 0.0503 & 3.3\\
& DiffCSP & \textbf{100.0} & \underline{98.85} & \underline{99.74} & 98.27 & \underline{0.1110} & 0.0263 & \underline{0.0128} & 2.6\\
\cmidrule{2-9}
& EH-Diff (Sphere)
& \textbf{100.0} & 98.60 & 99.68 & \textbf{98.82} & 0.1116 & 0.0879 & \textbf{0.0118} & 2.9
\\
& EH-Diff (Cube) 
& \textbf{100.0} & {98.75} & \textbf{99.77} & \textbf{98.82} & \textbf{0.1103} & \textbf{0.0214} & 0.0620 & \textcolor{red}{\textbf{2.1}}
\\
\midrule
\midrule
\multirow{9}{*}{\rotatebox{90}{\textbf{Carbon-24}}} & FTCP & 0.08 & - & 0.00 & 0.00 & 5.206 & 19.05 & - &8.0\\
& G-SchNet & \underline{99.94} & - & 0.00 & 0.00 & 0.9427 & 1.320 & - &7.0\\
& P-G-SchNet & 48.39 & - & 0.00 & 0.00 & 1.533 & 134.7 & - &7.8\\
& CDVAE & \textbf{100.0} & - & 99.80 & 83.08 & 0.1407 & 0.2850 & - &4.6\\
& SyMat & \textbf{100.0} & - & \textbf{100.0} & 97.59 & 0.1195 & 3.9576 & - &3.2\\
& EquiCSP & \textbf{100.0} & - & 99.75 & 97.12 & 0.0734 & \textbf{0.0508} & - &3.2\\
& DiffCSP & \textbf{100.0} & - & \underline{99.90} & 97.27 & 0.0805 & 0.0820 & - &3.0\\
\cmidrule{2-9}
& EH-Diff (Sphere) 
& \textbf{100.0} & - & \underline{99.90} & \textbf{99.13} & \underline{0.0655} & 0.0561 & -
& 1.8
\\
& EH-Diff (Cube) 
& \textbf{100.0} & - & \underline{99.90} & \underline{98.03} & \textbf{0.0601} & \underline{0.0533} & -
& \textcolor{red}{\textbf{1.6}}
\\
\midrule
\midrule
\multirow{9}{*}{\rotatebox{90}{\textbf{MP-20}}} & FTCP & 1.55 & 48.37 & 4.72 & 0.09 & 23.71 & 160.9 & 0.7363 &8.9\\
& G-SchNet & 99.65 & 75.96 & 38.33 & 99.57 & 3.034 & 42.09 & 0.6411 &7.4\\
& P-G-SchNet & 77.51 & 76.40 & 41.93 & 99.74 & 4.04 & 2.448 & 0.6234 &6.9\\
& CDVAE & \textbf{100.0} & 86.70 & 99.15 & 99.49 & 0.6875 & 0.2778 & 1.432 &5.3\\
& SyMat & \textbf{100.0} & \underline{88.26} & {98.97} & \textbf{99.97} & 0.3805 & 0.3506 & 0.5067 &3.6\\
& EquiCSP & 99.97 & 82.20 & 99.65 & 99.68 & \textbf{0.1300} & 0.0848 & 0.3978 &4.1\\
& DiffCSP & \textbf{100.0} & 83.25 & \underline{99.71} & 99.76 & 0.3502 & 0.1247 & \textbf{0.3398} &2.7\\
\cmidrule{2-9}
& EH-Diff (Sphere) 
& \underline{99.90} & \textbf{89.50} & 99.69 & \underline{99.89} & 0.3996 & \textbf{0.0590} & 0.4119
& 3.0
\\
& EH-Diff (Cube) 
& \textbf{100.0} & 82.50 & \textbf{99.81} & {99.86} & \underline{0.2880} & \underline{0.0804} & \underline{0.3474}
& \textcolor{red}{\textbf{2.3}}
\\
\bottomrule
\end{tabular}
\end{table*}

\subsection{Main Results}

\Cref{table:results} presents the performance of EH-Diff (Sphere and Cube) on the Stable Structure Prediction task across four datasets. EH-Diff consistently achieves the best or second-best Match Rate and RMSE even with a single diffusion sample, outperforming both optimization-based methods (RS, BO, PSO) that require thousands of samples and other generative models such as CDVAE and EquiCSP. Specifically, on the more complex MP-20 and MPTS-52 datasets, EH-Diff (Cube) achieves the highest MR (58.12 and 16.99, respectively) and the lowest RMSE (0.0460 and 0.1340).

\Cref{table:selectedresults} further shows that we achieve superior accuracy with competitive runtimes. EH-Diff (Sphere) has the highest MR (82.69) and lowest RMSE (0.0092), while maintaining an efficient runtime of 13$\sim$14 seconds, comparable to DiffCSP's 10 seconds. This indicates that EH-Diff can achieve high prediction accuracy without sacrificing efficiency.

\subsection{Results with Ab Initio Crystal Generation}

Although our method is initially designed to tackle CSP with fixed composition $\mathbf{A}$, it can be extended to address the ab initio generation task by also generating $\mathbf{A}$. This extension is achieved by optimizing the one-hot representation of $\mathbf{A}$ using a DDPM-based approach. More details can be found in \Cref{appendix:abinitio}\footnote{Composition-based metrics are irrelevant for Carbon-24, as its structures consist solely of carbon. Thus, the corresponding results in \Cref{tab:resultsabinitio} are represented by a ``-".}. In addition, we define a ranking score for each model as $``\text{Rank Score}" = \frac{\sum_{m \in \mathcal{M}} \text{rank}_m }{|\mathcal{M}|}$, where $\mathcal{M}$ represents the set of metrics and $\text{rank}_m$ denotes the rank of the model according to the metric $m$.

\Cref{tab:resultsabinitio} presents a comprehensive evaluation of ab initio crystal generation methods. Both EH-Diff variants achieve 100\% structural validity across all datasets, matching or surpassing state-of-the-art methods (DiffCSP, CDVAE, EquiCSP). Both variants also exhibit outstanding coverage metrics (COV-Recall/Precision), with near-perfect scores on Perov-5 and Carbon-24. EH-Diff (Cube) additionally excels in property accuracy metrics (\(d_\rho\), \(d_E\), \(d_{\text{elem}}\)), indicating high physical and chemical plausibility of the generated structures. Notably, EH-Diff (Cube) achieves the best overall rank score across all datasets, underscoring the benefits of capturing high-order interactions and preserving crystal symmetries. Overall, the Cube variant consistently outperforms the Sphere variant across most metrics.

\subsection{Ablation Study}

To investigate the contribution of high-order interactions, we conduct an ablation study by varying the maximum hyperedge order $k$ in our model. Specifically, we train EH-Diff variants with $k \in \{2, 3, 4, 5\}$ on the MP-20 dataset, where $k=2$ corresponds to the pairwise-only baseline (equivalent to a standard graph representation). The results are summarized in \Cref{tab:ablation_order} and we summarize the key observations below:

\begin{table}[t]
\centering
\caption{Ablation study on maximum hyperedge order $k$ using the MP-20 dataset.}
\label{tab:ablation_order}
\begin{tabular}{lcc}
\toprule
\textbf{Max Hyperedge Order} & \textbf{MR (\%) $\uparrow$} & \textbf{RMSE $\downarrow$} \\
\midrule
$k=2$ (pairwise only) & 51.49 & 0.0631 \\
$k=3$ & 55.23 & 0.0512 \\
$k=4$ & 57.81 & 0.0475 \\
$k=5$ & \textbf{58.12} & \textbf{0.0460} \\
\bottomrule
\end{tabular}
\end{table}

\noindent(1) \textit{High-order interactions are essential.} Increasing the hyperedge order from $k=2$ (pairwise) to $k=5$ (full) yields a substantial improvement of \textbf{6.63\%} in Match Rate (from 51.49\% to 58.12\%) and a \textbf{27.1\%} reduction in RMSE (from 0.0631 to 0.0460). This confirms that multi-atomic interactions beyond pairwise bonds carry critical structural information for accurate crystal structure prediction.\\
(2) \textit{Diminishing returns at higher orders.} The performance gain is most pronounced when transitioning from $k=2$ to $k=3$ (+3.74\% MR), capturing essential three-body interactions such as bond angles. Subsequent increases in $k$ yield progressively smaller improvements, suggesting that the majority of high-order structural information is encoded in hyperedges of order 3--4.\\
(3) \textit{Alignment with coordination chemistry.} The observed trend aligns with physical intuition: coordination polyhedra in crystals typically involve 4--6 atoms (\emph{e.g.}, tetrahedral and octahedral coordination), which are effectively captured by hyperedges of order $k \geq 4$. The marginal gain from $k=4$ to $k=5$ indicates that most coordination environments in MP-20 are adequately represented with $k=4$.

These findings validate our hypergraph-based approach and demonstrate that explicitly modeling multi-atomic interactions leads to significant performance gains over traditional pairwise graph representations.

\section{Conclusion}

We introduce a hypergraph-based framework for modeling crystal structures and propose the Equivariant Hypergraph Diffusion Model (EH-Diff) for Crystal Structure Prediction (CSP). By representing atomic interactions through hyperedges rather than pairwise edges, EH-Diff captures the intricate higher-order interactions inherent in crystalline materials while naturally encoding essential symmetries and periodicities. Our extensive experimental evaluation demonstrates that EH-Diff achieves state-of-the-art performance across multiple benchmark datasets, outperforming existing CSP methods even with a single diffusion sample.

\section{Limitations and Ethical Considerations}

Our approach has several limitations. First, the hyperedge construction relies on geometric heuristics (sphere/cube scanning), which may not optimally capture interactions in all crystal systems; adaptive or learnable hyperedge construction strategies could further improve performance. Second, our model is designed for periodic crystalline materials and may not directly generalize to amorphous or non-periodic structures. Third, the current framework assumes a fixed number of atoms within a unit cell during CSP inference, and extending it to variable-composition search remains an open direction.

Regarding ethical considerations, we used publicly available datasets for all experiments. We acknowledge that model performance may reflect biases present in the training data. While this research aims to accelerate materials discovery for beneficial applications, we recognize the potential for unintended misuse in designing harmful materials. We encourage responsible development and application of such generative tools, emphasizing the need for ethical oversight and experimental validation of all AI-proposed structures.

\begin{acks}
This work was partially supported by the Strategic Priority Research Program of the Chinese Academy of Sciences (No. XDB0680101), the National Natural Science Foundation of China (No. 62472416, 62376064 and 62402491), and the CAS Project for Young Scientists in Basic Research (No. YSBR-008). The model training was performed on the robotic AI-Scientist platform of Chinese Academy of Science.
\end{acks}

\bibliographystyle{ACM-Reference-Format}
\balance
\bibliography{mainref}


\appendix

\section*{Appendix}\label{sec:appendix}

\section{Preliminaries}\label{appendix:Preliminaries}

Let \( \{x_1, x_2, \dots\} \) denote a set and \( (x_1, x_2, \dots) \) denote an ordered tuple. The notation \( [n] \) denotes the set \( \{1, 2, \dots, n\} \). The space of order-\(k\) tensors is denoted by \( \mathbb{R}^{n^k \times d} \), where \(d\) is the feature dimensionality. For an order-\(k\) tensor \( \mathbf{A} \in \mathbb{R}^{n^k \times d} \), a multi-index \( \mathbf{i} = (i_1, \ldots, i_k) \in [n]^k \) indexes the element \( \mathbf{A}_{\mathbf{i}} = \mathbf{A}_{i_1, \ldots, i_k} \in \mathbb{R}^d \). Let \( \mathbb{S}_n \) denote the set of all permutations of \( [n] \). A permutation \( \pi \in \mathbb{S}_n \) acts on a multi-index \( \mathbf{i} \) by permuting its components, i.e., \( \pi(\mathbf{i}) = (\pi(i_1), \ldots, \pi(i_k)) \). Its action on the tensor \( \mathbf{A} \) is defined by permuting the tensor indices, such that the transformed tensor satisfies \( (\pi \cdot \mathbf{A})_{\mathbf{i}} = \mathbf{A}_{\pi^{-1}(\mathbf{i})} \).

\vspace{1mm}{\bf Hypergraph}. A hypergraph generalizes a graph by allowing edges to connect any subset of nodes, not just pairs. Formally, a hypergraph \( \mathbf{H} = (\mathbf{V}, \mathbf{E}, \mathbf{X}) \) consists of a node set \( \mathbf{V} \) of size \( n \), a hyperedge set \( \mathbf{E} \) of size \( m \), and a feature matrix \( \mathbf{X} \in \mathbb{R}^{m \times d} \) that encodes the attributes of the hyperedges. Each hyperedge \( e \in \mathbf{E} \) is a subset of \( \mathbf{V} \), and its size \( |e| \) denotes its order (or cardinality). For example, a first-order edge \( \{i\} \) corresponds to the \( i \)-th node itself, a second-order edge \( \{i, j\} \) represents a pairwise link, and an order-\( k \) edge \( \{i_1, \dots, i_k\} \) connects \( k \) nodes. The feature associated with hyperedge \( e \) is denoted by \( \mathbf{X}_{e} \in \mathbb{R}^d \). For theoretical analysis, hypergraphs can also be represented using higher-order tensors as \( \mathbf{H} = (\mathbf{V}, \mathbf{A}) \), where \( \mathbf{A} \in \mathbb{R}^{n^k \times d} \) is a tensor that encodes hyperedge features \citep{kim2022equivariant}. In this tensor representation, \( \mathbf{A}_{i_1, \dots, i_k} \) represents the feature of the hyperedge \( (i_1, \dots, i_k) \). While this dense tensor formulation provides a unified framework for equivariance analysis, practical implementations typically employ the sparse set-based representation \( (\mathbf{V}, \mathbf{E}, \mathbf{X}) \) where each hyperedge is uniquely defined by a set of node indices.

\vspace{1mm}{\bf Hypergraph as a Sequence of Higher-order Tensors.} 
For a hypergraph \( (\mathbf{V}, \mathbf{E}, \mathbf{X}) \) with maximum hyperedge order \( K \), we can decompose it into a sequence of \( k \)-uniform hypergraphs for all \( k \leq K \), denoted as 
\[
(\mathbf{V}, \mathbf{E}^{(k)}, \mathbf{X}^{(k)})_{k \leq K} = (\mathbf{V}, \mathbf{E}^{(:K)}, \mathbf{X}^{(:K)}),
\]
where \( \mathbf{E}^{(k)} \) is the set of all order-\( k \) hyperedges in \( \mathbf{E} \), and \( \mathbf{X}^{(k)} \) is a matrix formed by stacking the features \( \{ \mathbf{X}_e \mid e \in \mathbf{E}^{(k)} \} \) for each hyperedge \( e \) in \( \mathbf{E}^{(k)} \). This representation allows flexible modeling of heterogeneous interaction orders within a single structure.

\vspace{1mm}{\bf Permutation Equivariance.} 
In (hyper)graph learning, a common goal is to design a function \( f \) that maps a higher-order tensor \( \mathbf{A} \) to an output. Because the tensor representation changes under node permutation, it is essential for \( f \) to exhibit invariance or equivariance with respect to these permutations. If the output is a scalar or a vector, \( f \) must be permutation-invariant, meaning that
\[
f(\pi \cdot \mathbf{A}) = f(\mathbf{A})
\]
holds for any permutation \( \pi \in \mathbb{S}_n \). Conversely, if the output is a tensor, \( f \) must be permutation-equivariant, ensuring that
\[
f(\pi \cdot \mathbf{A}) = \pi \cdot f(\mathbf{A})
\]
for all \( \pi \in \mathbb{S}_n \) \citep{kim2022equivariant}. When \( f \) is implemented as a neural network (typically a sequence of linear transformations and non-linear activations), the design reduces to constructing linear layers that are invariant or equivariant under permutations.

\section{Ab Initio Generation}\label{appendix:abinitio}

We extend EH-Diff to the domain of ab initio crystal generation by incorporating discrete diffusion on atom types, denoted as $\mathbf{A}$. We compare EH-Diff against five state-of-the-art generative methods: FTCP \cite{ren2022invertible}, Cond-DFC-VAE \cite{court20203}, G-SchNet \cite{gebauer2022inverse}, its periodic extension P-G-SchNet, and the original version of the generative framework \cite{xie2021crystal}. For EH-Diff, it samples the number of atoms according to the statistical distribution estimated from the training dataset, following the approach of \citet{hoogeboom2022equivariant}. This enables the generation of crystal structures with variable sizes. Consistent with the evaluation protocol in \cite{xie2021crystal}, we assess generation performance using three metrics: Validity, Coverage, and Property Statistics. Validity measures the percentage of generated crystals that are physically plausible. Coverage quantifies the similarity between the generated samples and the test dataset in terms of structural diversity. Property statistics evaluate physical and chemical properties of the generated crystals, including density, formation energy, and the distribution of elemental compositions.

\subsection{Discrete Diffusion for Atomic Types in Ab Initio Generation}

For ab initio crystal generation, we extend EH-Diff to jointly generate atomic types $\mathbf{A}$ alongside lattice vectors $\mathbf{L}$ and fractional coordinates $\mathbf{F}$. We employ a discrete diffusion process following the D3PM framework~\cite{austin2021structured}.

\vspace{1mm}\textbf{Forward Process.} Let $\mathbf{a}_i \in \{1, \ldots, K\}$ denote the atomic type of atom $i$, where $K$ is the number of element types. The forward diffusion corrupts atomic types via a categorical transition matrix:
\begin{equation}
q(\mathbf{a}_i^t | \mathbf{a}_i^{t-1}) = \text{Cat}(\mathbf{a}_i^t; \mathbf{Q}_t \mathbf{a}_i^{t-1}),
\end{equation}
where $\mathbf{Q}_t \in \mathbb{R}^{K \times K}$ is defined as:
\begin{equation}
\mathbf{Q}_t = (1 - \beta_t) \mathbf{I} + \beta_t \mathbf{1}\mathbf{m}^\top,
\end{equation}
with $\mathbf{m} \in \mathbb{R}^K$ representing the marginal distribution of element types in the training set. This ensures that as $t \to T$, the atomic type distribution converges to the training marginal $\mathbf{m}$.

\vspace{1mm}\textbf{Reverse Process.} The reverse transition is parameterized as:
\begin{equation}
p_\theta(\mathbf{a}_i^{t-1} | \mathbf{a}_i^t, \mathbf{F}_t, \mathbf{L}_t) = \text{Cat}(\mathbf{a}_i^{t-1}; \hat{\mathbf{p}}_\theta(\mathbf{a}_i^t, \mathbf{h}_i^{(\mathcal{L})}, t)),
\end{equation}
where $\hat{\mathbf{p}}_\theta$ is an MLP that takes the final-layer node representation $\mathbf{h}_i^{(\mathcal{L})}$ (which encodes both geometric and compositional information via hypergraph message passing) and outputs a probability distribution over element types.

\vspace{1mm}\textbf{Training Loss.} The discrete diffusion loss is:
\begin{equation}
\mathcal{L}_A = \mathbb{E}_{q(\mathbf{A}_t | \mathbf{A}_0), t \sim \mathcal{U}(1,T)} \left[ -\sum_{i=1}^{N} \log p_\theta(\mathbf{a}_i^0 | \mathbf{a}_i^t, \mathbf{F}_t, \mathbf{L}_t) \right].
\end{equation}

\vspace{1mm}\textbf{Joint Training.} The total loss combines continuous and discrete components:
\begin{equation}
\mathcal{L}_{\text{total}} = \mathcal{L}_L + \mathcal{L}_F + \lambda_A \mathcal{L}_A,
\end{equation}
where $\lambda_A = 0.1$ balances the contribution of atomic type prediction.

\vspace{1mm}\textbf{Coupling with Geometry.} A key advantage of our hypergraph formulation is that atomic type prediction benefits from high-order geometric context. The hypergraph message passing aggregates information from entire coordination polyhedra, allowing the model to infer element types based on local geometric signatures (e.g., octahedral vs. tetrahedral coordination often correlates with different oxidation states and elements).

\begin{table}[t]
\centering
\caption{Comparison of high-order structural representations on the MP-20 dataset. 
All methods employ identical diffusion processes; only the structural encoding differs.}
\label{tab:representation_comparison}
\begin{tabular}{lccc}
\toprule
\textbf{Representation} & \textbf{MR (\%) $\uparrow$} & \textbf{RMSE $\downarrow$} & \textbf{Time (s) $\downarrow$} \\
\midrule
Pairwise Graph & 51.49 & 0.0631 & 10 \\
Subgraph-3 & 53.87 & 0.0573 & 18 \\
Subgraph-4 & 54.12 & 0.0561 & 47 \\
Motif-based & 52.34 & 0.0598 & 15 \\
\midrule
\textbf{Hypergraph (Ours)} & \textbf{58.12} & \textbf{0.0460} & 14 \\
\bottomrule
\end{tabular}
\end{table}

\section{Hypergraph Modeling Variants}

\begin{figure}[b]
    \centering
    \includegraphics[width=.6\linewidth]{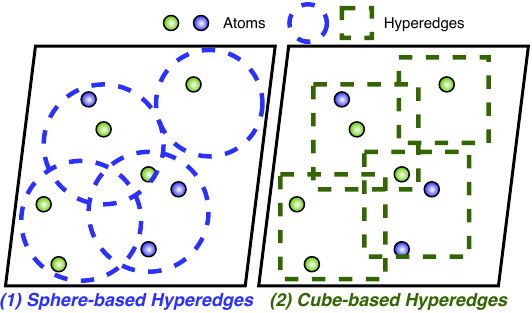}
    \caption{Illustration of hyperedge construction via spherical and cubic scanning volumes.}\label{fig:EH-Diff-hypergraph}
\end{figure}

Figure \ref{fig:EH-Diff-hypergraph} illustrates hyperedge construction in crystal structures via spherical and cubic scanning volumes. To identify hyperedges, a scanning volume (sphere or cube) is centered on or positioned to enclose specific atoms. A hyperedge is formed when a designated group of atoms (green and blue nodes) is fully contained within the same scanning volume. The left panel shows hyperedge formation using a spherical scanning volume. The right panel shows the analogous process using a cubic scanning volume. This comparison highlights how the geometry of the scanning volume influences hyperedge identification.

\section{Proofs of \Cref{proposition:3.1} and \Cref{proposition:3.2}}\label{appendix:proofs}

In this section, we provide proofs for \Cref{proposition:3.1} and \Cref{proposition:3.2}.

\subsection{Proof of \Cref{proposition:3.1}}

\begin{proof}
We prove the two claims of \Cref{proposition:3.1} separately.

\vspace{1mm}\textbf{Part I: $\hat{\epsilon}_{\mathbf{L}}$ is $O(3)$-equivariant.} We first establish the $O(3)$-invariance of the Gram matrix $\mathbf{L}^{\top}\mathbf{L}$. For any orthogonal transformation $\mathbf{Q} \in O(3)$ satisfying $\mathbf{Q}^{\top}\mathbf{Q} = \mathbf{I}$, we have:
\begin{equation}
    (\mathbf{QL})^{\top}(\mathbf{QL}) = \mathbf{L}^{\top}\mathbf{Q}^{\top}\mathbf{Q}\mathbf{L} = \mathbf{L}^{\top}\mathbf{L}.
\end{equation}
Since the message passing in \Cref{equation:MP} depends on $\mathbf{L}$ only through $\mathbf{L}^{\top}\mathbf{L}$, the node representations $\mathbf{h}_i^{(\ell)}$ are $O(3)$-invariant with respect to $\mathbf{L}$. 
By \Cref{equation:epsilonF}, the lattice denoising term is computed as $\hat{\epsilon}_{\mathbf{L}} = \mathbf{L} \varphi\left(\frac{1}{N}\sum_{i=1}^{N} \mathbf{h}_i^{(\mathcal{L})}\right)$. Under an orthogonal transformation $\mathbf{Q}$, we obtain:
\begin{align}
    \hat{\epsilon}_{\mathbf{L}}(\mathbf{QL}, \mathbf{F}, \mathbf{A}) 
    = (\mathbf{QL}) \varphi\left(\frac{1}{N}\sum_{i=1}^{N} \mathbf{h}_i^{(\mathcal{L})}\right) 
    &= \mathbf{Q} \cdot \mathbf{L} \varphi\left(\frac{1}{N}\sum_{i=1}^{N} 
    \mathbf{h}_i^{(\mathcal{L})}\right)
    \notag \\
    &= \mathbf{Q} \hat{\epsilon}_{\mathbf{L}}(\mathbf{L}, \mathbf{F}, \mathbf{A}),
\end{align}
which confirms the $O(3)$-equivariance of $\hat{\epsilon}_{\mathbf{L}}$.

\vspace{1mm}\textbf{Part II: $\hat{\epsilon}_{\mathbf{F}}$ is periodic translation invariant.} We show that the Fourier feature function $\psi: (-1,1)^3 \to [-1,1]^{3 \times K}$ is periodic translation invariant. Recall that the wrapping function $w(\cdot)$ maps each coordinate to $[0,1)$ via $w(x) = x - \lfloor x \rfloor$.

A key observation is that for any $a, b, t \in \mathbb{R}$, the difference of wrapped values satisfies
\begin{equation}
    w(a + t) - w(b + t) = (a - b) + z, \quad z \in \mathbb{Z},
\end{equation}
where $z = \lfloor b + t \rfloor - \lfloor a + t \rfloor$ is an integer that depends on the specific values of $a$, $b$, and $t$, but is always integral. This follows directly from the definition: $w(a+t) - w(b+t) = (a+t) - \lfloor a+t\rfloor - (b+t) + \lfloor b+t \rfloor = (a-b) + (\lfloor b+t\rfloor - \lfloor a+t\rfloor)$.

For odd $k$, let $\psi(x)[c,k] = \cos(2\pi m x_c)$ where $m = \lceil k/2 \rceil$. For any translation $\mathbf{t} \in \mathbb{R}^3$, applying the above identity component-wise yields:
\begin{align}
    \psi\bigl(w(f_j + \mathbf{t}) - w(f_i + \mathbf{t})\bigr)[c, k]
    &= \cos\Bigl(2\pi m \bigl(w(f_{j,c} + t_c) - w(f_{i,c} + t_c)\bigr)\Bigr) \notag \\
    = \cos\bigl(2\pi m (f_{j,c} - f_{i,c})\bigr) 
    &= \psi(f_j - f_i)[c, k],
\end{align}
where $z_c = \lfloor f_{i,c} + t_c \rfloor - \lfloor f_{j,c} + t_c \rfloor \in \mathbb{Z}$, and the third equality uses $\cos(\theta + 2\pi m z_c) = \cos(\theta)$ for any integer $m z_c$.
Similarly, for even $k$, let $\psi(x)[c,k] = \sin(2\pi m x_c)$. By the same identity and $\sin(\theta + 2\pi m z_c) = \sin(\theta)$:
\begin{align}
    \psi\bigl(w(f_j + \mathbf{t}) - w(f_i + \mathbf{t})\bigr)[c, k]
    &= \sin\Bigl(2\pi m \bigl(w(f_{j,c} + t_c) - w(f_{i,c} + t_c)\bigr)\Bigr) \notag \\
    = \sin\bigl(2\pi m (f_{j,c} - f_{i,c})\bigr)
    &= \psi(f_j - f_i)[c, k].
\end{align}
Since all inputs to the message passing layers in \Cref{equation:MP}---namely $\mathbf{L}^{\top}\mathbf{L}$ and $\psi(\sum_{j,j' \in \mathbf{e}}(f_{j'} - f_j))$---are periodic translation invariant, the node representations $\mathbf{h}_i^{(\ell)}$ inherit this invariance. Consequently, $\hat{\epsilon}_{\mathbf{F}}[:,i] = \varphi_{\mathbf{F}}(\mathbf{h}_i^{(\mathcal{L})})$ is periodic translation invariant.

\vspace{1mm}Combining Parts I and II, we conclude that the distribution generated by EH-Diff is invariant under the periodic $E(3)$ symmetry group.
\end{proof}

\subsection{Proof of \Cref{proposition:3.2}}
Before presenting the proof, we state a supporting lemma (proved in \cite{jiao2024crystal}) that will be used.
\begin{lemma}[\cite{jiao2024crystal}]\label{lemma:invariance}
Consider a generative Markov process defined as $p(x_0) = p(x_T) \int p(x_{0:T-1} \mid x_t) \, dx_{1:T}$. Suppose that the prior distribution $p(x_T)$ is $G$-invariant and that the Markov transitions $p(x_{t-1} \mid x_t)$, for $0 < t \leq T$, are $G$-equivariant. Then the marginal distribution $p(x_0)$ is $G$-invariant.
\end{lemma}

\begin{proof}
We prove the two claims of \Cref{proposition:3.2} separately.

\vspace{1mm}\textbf{Part I: $O(3)$-invariance of $p_\theta(\mathbf{L}_0)$.} The reverse transition is defined as:
\begin{equation}\label{eq:reverse_transition}
    p_\theta(\mathbf{L}_{t-1} \mid \mathbf{L}_t, \mathbf{F}_t, \mathbf{A}) = \mathcal{N}\left(\mathbf{L}_{t-1} \mid \boldsymbol{\mu}_t(\mathbf{L}_t), \sigma_t^2 \mathbf{I}\right),
\end{equation}
where the mean is given by:
\begin{equation}
    \boldsymbol{\mu}_t(\mathbf{L}_t) = a_t \left(\mathbf{L}_t - b_t \hat{\epsilon}_{\mathbf{L}}(\mathbf{L}_t, \mathbf{F}_t, \mathbf{A}, t)\right),
\end{equation}
with coefficients $a_t = \frac{1}{\sqrt{\alpha_t}}$, $b_t = \frac{\beta_t}{\sqrt{1-\bar{\alpha}_t}}$, $\sigma_t^2 = \beta_t \cdot \frac{1-\bar{\alpha}_{t-1}}{1-\bar{\alpha}_t}$, and $\bar{\alpha}_t = \prod_{s=1}^{t} \alpha_s$.
We verify the $O(3)$-equivariance of the transition. For any $\mathbf{Q} \in O(3)$, by the $O(3)$-equivariance of $\hat{\epsilon}_{\mathbf{L}}$ established in \Cref{proposition:3.1}:
\begin{align}
    \boldsymbol{\mu}_t(\mathbf{Q}\mathbf{L}_t) 
    &= a_t \left(\mathbf{Q}\mathbf{L}_t - b_t \hat{\epsilon}_{\mathbf{L}}(\mathbf{Q}\mathbf{L}_t, \mathbf{F}_t, \mathbf{A}, t)\right) \notag \\
    &= a_t \left(\mathbf{Q}\mathbf{L}_t - b_t \mathbf{Q}\hat{\epsilon}_{\mathbf{L}}(\mathbf{L}_t, \mathbf{F}_t, \mathbf{A}, t)\right) \notag \\
    &= \mathbf{Q} \cdot a_t \left(\mathbf{L}_t - b_t \hat{\epsilon}_{\mathbf{L}}(\mathbf{L}_t, \mathbf{F}_t, \mathbf{A}, t)\right) 
    = \mathbf{Q} \boldsymbol{\mu}_t(\mathbf{L}_t).
\end{align}
For a Gaussian random variable $\mathbf{L} \sim \mathcal{N}(\boldsymbol{\mu}, \sigma^2 \mathbf{I})$ and orthogonal $\mathbf{Q}$, we have:
\begin{equation}
    \mathbf{Q}\mathbf{L} \sim \mathcal{N}\left(\mathbf{Q}\boldsymbol{\mu}, \mathbf{Q}(\sigma^2 \mathbf{I})\mathbf{Q}^{\top}\right) = \mathcal{N}\left(\mathbf{Q}\boldsymbol{\mu}, \sigma^2 \mathbf{I}\right),
\end{equation}
where the last equality uses $\mathbf{Q}\mathbf{Q}^{\top} = \mathbf{I}$.
Combining these results, the transition probability satisfies:
\begin{align}
    p_\theta(\mathbf{Q}\mathbf{L}_{t-1} \mid \mathbf{Q}\mathbf{L}_t, \mathbf{F}_t, \mathbf{A}) 
    &= \mathcal{N}\left(\mathbf{Q}\mathbf{L}_{t-1} \mid \boldsymbol{\mu}_t(\mathbf{Q}\mathbf{L}_t), \sigma_t^2 \mathbf{I}\right) \notag \\
    &= \mathcal{N}\left(\mathbf{Q}\mathbf{L}_{t-1} \mid \mathbf{Q}\boldsymbol{\mu}_t(\mathbf{L}_t), \sigma_t^2 \mathbf{I}\right) \notag \\
    &= \mathcal{N}\left(\mathbf{L}_{t-1} \mid \boldsymbol{\mu}_t(\mathbf{L}_t), \sigma_t^2 \mathbf{I}\right) \notag \\
    &= p_\theta(\mathbf{L}_{t-1} \mid \mathbf{L}_t, \mathbf{F}_t, \mathbf{A}),
\end{align}
where the third equality follows from the change of variables $\mathbf{L}_{t-1} = \mathbf{Q}^{\top}(\mathbf{Q}\mathbf{L}_{t-1})$ with $|\det(\mathbf{Q})| = 1$ (since $\mathbf{Q}$ is orthogonal, the Jacobian determinant is unity), together with the isotropic covariance $\sigma_t^2 \mathbf{I}$ being invariant under orthogonal transformations. This confirms the $O(3)$-equivariance of the reverse transition.
Since the prior $p(\mathbf{L}_T) = \mathcal{N}(\mathbf{0}, \mathbf{I})$ is $O(3)$-invariant and the transitions are $O(3)$-equivariant, \Cref{lemma:invariance} implies that $p_\theta(\mathbf{L}_0)$ is $O(3)$-invariant.

\vspace{1mm}\textbf{Part II: Periodic translation invariance of $p_\theta(\mathbf{F}_0)$.} For fractional coordinates, the reverse process employs the score-based predictor-corrector method with wrapped normal distributions. The score function is given by $\nabla_{\mathbf{F}_t} \log q(\mathbf{F}_t \mid \mathbf{F}_0)$, which is approximated by $\hat{\epsilon}_{\mathbf{F}}$.
For any translation $\mathbf{t} \in \mathbb{R}^{3 \times 1}$, define $\tilde{\mathbf{F}}_t = w(\mathbf{F}_t + \mathbf{t}\mathbf{1}^{\top})$ where $w(\cdot)$ is the element-wise wrapping function. By the periodic translation invariance of $\hat{\epsilon}_{\mathbf{F}}$ from \Cref{proposition:3.1}:
\begin{equation}
    \hat{\epsilon}_{\mathbf{F}}(\tilde{\mathbf{F}}_t, \mathbf{L}_t, \mathbf{A}, t) = \hat{\epsilon}_{\mathbf{F}}(\mathbf{F}_t, \mathbf{L}_t, \mathbf{A}, t).
\end{equation}
The prior distribution $q(\mathbf{F}_T) = \mathcal{U}([0,1)^{3 \times N})$ is uniform and hence periodic translation invariant. Since both the prior and the score function are periodic translation invariant, the Langevin dynamics in the reverse process preserves this invariance at each step. Therefore, $p_\theta(\mathbf{F}_0)$ is periodic translation invariant.
The learned marginal distributions $p_\theta(\mathbf{L}_0)$ and $p_\theta(\mathbf{F}_0)$ are $O(3)$-invariant and periodic translation invariant, respectively.
\end{proof}

\section{Related Works}

\paragraph{Crystal Structure Prediction} Traditional CSP methods combine DFT \cite{kohn1965self} with optimization algorithms \cite{pickard2011ab, yamashita2018crystal} to locate energy minima, but at substantial computational cost. Machine learning surrogates \cite{cheng2022crystal} improve scalability, and more recently, deep generative models using representations such as voxel grids \cite{gebauer2022inverse}, distance matrices \cite{hu2020distance}, and direct 3D coordinates \cite{nouira2018crystalgan} have emerged as efficient alternatives for CSP \cite{jiao2024crystal, song2024diffusion}.

\paragraph{Equivariant Models} $E(3)$-equivariant GNNs \cite{schutt2018schnet, satorras2021n} have become standard for modeling physical systems. For periodic materials, \citet{xie2018crystal} introduced multi-graph edge construction across lattice cells, and subsequent work \cite{yan2022periodic, jiao2024crystal} incorporated periodic encoding into Transformer architectures.


\end{document}